\tikzstyle{every node}=[fill, circle]
\tikzstyle{every edge}=[draw]
\newcommand{\tr}{\mathrm{Tr}}
\newcommand{\ee}{\mathbf{e}}
\newtheorem{theorem}{Theorem}[section]
\newtheorem{lemma}[theorem]{Lemma}
\newtheorem{prop}[theorem]{Proposition}
\newtheorem{cor}[theorem]{Corollary}
\theoremstyle{definition}
\newtheorem{example}[theorem]{Example}
\begin{document}

\title{Pretty Good State Transfer and Minimal Polynomials}
\author{Christopher M. van Bommel\footnote{Supported by a Pacific Institute for the Mathematical Sciences Post-doctoral Fellowship.}\\Department of Mathematics\\University of Manitoba\\Winnipeg, MB, Canada\\\texttt{Christopher.VanBommel@umanitoba.ca}}
\date{\today}

\maketitle

\begin{abstract}
We examine conditions for a pair of strongly cospectral vertices to have pretty good quantum state transfer in terms of minimal polynomials, and provide cases where pretty good state transfer can be ruled out.  We also provide new examples of simple, unweighted graphs exhibiting pretty good state transfer.  Finally, we consider modifying paths by adding symmetric weighted edges, and apply these results to this case.
\end{abstract}

\section{Introduction}

In quantum information processing, a key requirement is the ability to transfer quantum states from one location to another.  Perhaps an obvious method of doing so is via a series of SWAP gates; however, such a procedure would require a great deal of control over the system and would be highly prone to errors, as analyzed by Petrosyan, Nikolopoulos, and Lambropoulos~\cite{PNL10}.  Instead, one could take advantage of the natural propagation of the system as time passes to transfer quantum states.  The protocol for quantum communication through unmeasured and unmodulated spin chains was presented by Bose~\cite{B03}, and led to the interpretation of quantum channels implemented by spin chains as wires for transmission of states.  We model the spin chains as graphs.

When considering a quantum state of a particle on a graph, one goal we consider is that of \emph{perfect state transfer}, in which the fidelity of the state transfer is equal to one.  This concept of perfect state transfer was introduced by Christandl et al.~\cite{CDEL04, CDDEKL05}, who showed that perfect state transfer on uniformly coupled spin chains is only possible for chains of two or three qubits.  If non-uniform coupling schemes are considered, then perfect state transfer can be realized on spin chains of arbitrary length, as demonstrated by Christandl et al.~\cite{CDDEKL05} and Yung and Bose~\cite{YB05}.  Later, Wang, Shuang, and Rabitz~\cite{WSR11} and Vinet and Zhedanov~\cite{VZ12a} investigated finding all coupling schemes allowing perfect state transfer.  Engineering such a scheme, however, would be highly difficult in practice.  Kempton, Lippner, and Yau~\cite{KLY17} demonstrated that for paths of length at least four, there is no potential (graph theoretically, weighted loops) that permits perfect state transfer to be achieved.

On the other hand, it is worth investigating whether relaxing the requirement on the fidelity would lead to better implementation.  Such a question is motivated by arguments that these implementations are too demanding compared to the level of fidelity required for most tasks in quantum information processing~\cite{Zenchuk2012}.  Hence, \emph{pretty good state transfer} has been introduced by multiple authors, including Vinet and Zhedanov~\cite{VZ12} (under the term almost perfect state transfer) and Godsil~\cite{Godsil2012}, which requires there to be times where the fidelity of transfer is arbitrarily close to one.

Godsil, Kirkland, Servini, and Smith~\cite{GKSS12} considered pretty good state transfer on unweighted paths, and proved the following.  We let $P_n$ denote a path on $n$ vertices, and assume without loss of generality that the vertices of $P_n$ are labelled 1 to $n$ such that vertices with consecutive labels are adjacent.

\begin{theorem} \cite{GKSS12} \label{pgst-path-end}
Pretty good state transfer occurs between the end vertices of $P_n$ if and only if $n = p - 1$, $2p - 1$, where $p$ is a prime, or $n = 2^m - 1$.  Moreover, when pretty good state transfer occurs between the end vertices of $P_n$, then it occurs between vertices $a$ and $n + 1 - a$ for all $a \neq (n + 1) / 2$.
\end{theorem}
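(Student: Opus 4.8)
The plan is to reduce pretty good state transfer between the ends of $P_n$ to a question about integer linear relations among the eigenvalues, which are cosines, and then to settle that question by a dimension count in a cyclotomic field. First I would record the spectral data of $P_n$: it has simple eigenvalues $\lambda_r = 2\cos\!\bigl(\tfrac{r\pi}{n+1}\bigr)$ for $r = 1,\dots,n$, with a $\lambda_r$-eigenvector whose entry at vertex $j$ is $\sin\!\bigl(\tfrac{jr\pi}{n+1}\bigr)$. This is nonzero at $j=1$ and at $j=n$ for every $r$, so vertices $1$ and $n$ have full eigenvalue support; and since $\sin\!\bigl(\tfrac{nr\pi}{n+1}\bigr) = (-1)^{r+1}\sin\!\bigl(\tfrac{r\pi}{n+1}\bigr)$, they are strongly cospectral, with the sign $\sigma_r$ relating $E_r\ee_1$ and $E_r\ee_n$ equal to $(-1)^{r+1}$; in particular $\sigma_r = -1$ exactly for even $r$. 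By the standard spectral criterion for pretty good state transfer between strongly cospectral vertices (a consequence of Kronecker's theorem on simultaneous Diophantine approximation), pretty good state transfer between the ends of $P_n$ occurs if and only if there is \emph{no} integer vector $(\ell_1,\dots,\ell_n)$ with
\[
\sum_{r=1}^{n}\ell_r\lambda_r = 0, \qquad \sum_{r=1}^{n}\ell_r = 0, \qquad\text{and}\qquad \sum_{r\ \text{even}}\ell_r\ \text{odd}.
\]

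Let $\Lambda_n \subseteq \mathbb{Z}^n$ be the lattice of integer vectors satisfying the first two conditions. It always contains the sublattice $\Lambda_n^{0}$ generated by the ``pairing'' relations coming from the identity $\lambda_{n+1-r} = -\lambda_r$ (together with $\lambda_{(n+1)/2} = 0$ when $n$ is odd), balanced to have vanishing augmentation $\sum_r\ell_r$. A direct computation shows that the parity functional $\ell\mapsto \sum_{r\ \text{even}}\ell_r\bmod 2$ vanishes identically on $\Lambda_n^{0}$: in each pair $\{r,\,n+1-r\}$ the two indices have equal parity when $n+1$ is even and opposite parity when $n+1$ is odd, and the zero-augmentation condition forces the remaining free coefficient to be even. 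So the theorem amounts to two tasks: (i) if $n+1$ is a prime, twice an odd prime, or a power of $2$, then in fact $\Lambda_n = \Lambda_n^{0}$, so the criterion is met and pretty good state transfer occurs; (ii) otherwise $\Lambda_n$ contains an element outside $\Lambda_n^{0}$ on which the parity functional equals $1$, so pretty good state transfer is ruled out.

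For (i), put $N = 2(n+1)$ and write $\lambda_r = \zeta^r + \zeta^{-r}$ with $\zeta = e^{i\pi/(n+1)}$ a primitive $N$-th root of unity, so that $\mathrm{span}_{\mathbb{Q}}\{\lambda_r\}$ lies in the maximal real subfield of $\mathbb{Q}(\zeta)$, of degree $\phi(N)/2$ over $\mathbb{Q}$. Then $\Lambda_n = \Lambda_n^{0}$ is equivalent to $\mathrm{rank}(\Lambda_n) = \mathrm{rank}(\Lambda_n^{0})$, and the latter has rank essentially $n/2$ while the former equals $n$ minus the dimension of $\mathrm{span}_{\mathbb{Q}}\{\lambda_r\}$; comparing $n/2$ with the bound $\phi(N)/2$ shows that there is ``no room'' for a further relation precisely in the three listed cases, and in those cases one confirms $\Lambda_n = \Lambda_n^{0}$ by checking that $2\cos\!\bigl(\tfrac{\pi}{n+1}\bigr)$ generates the real subfield and that the ``first-half'' eigenvalues form a basis of their span (a Chebyshev/Vandermonde argument when $n+1 = p$ or $2p$, and a short direct argument on vanishing sums of $2^{m+1}$-th roots of unity when $n+1 = 2^{m}$). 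For (ii), in each excluded factorization I would exhibit an explicit element of $\Lambda_n\setminus\Lambda_n^{0}$ of odd parity, built from the basic cyclotomic relations $1 + \eta_p + \cdots + \eta_p^{\,p-1} = 0$ for an odd prime $p \mid n+1$ — rotated, symmetrized via $\zeta^k \mapsto \zeta^k + \zeta^{-k}$, and combined in pairs so as to cancel the augmentation. I expect this last step to be the main obstacle: manufacturing, in every excluded case, a relation that simultaneously has zero augmentation \emph{and} odd even-index sum is exactly the point where the prime $2$ behaves differently from the odd primes, which is what forces the asymmetric list $\{p,\,2p,\,2^{m}\}$.

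Finally, the ``moreover'' clause. The reflection automorphism of $P_n$ exchanges the vertices $a$ and $n+1-a$, so they are strongly cospectral, with common eigenvalue support $S_a = \{r : (n+1)\nmid ar\}$ and, by the same sine computation, sign $\sigma_r = (-1)^{r+1}$ on $S_a$. If pretty good state transfer failed between $a$ and $n+1-a$, the criterion would supply an integer relation supported on $S_a$ with zero augmentation and odd even-index sum; padding it with zeros yields such a relation on all of $\{1,\dots,n\}$, contradicting pretty good state transfer between the end vertices. Hence the latter forces pretty good state transfer between $a$ and $n+1-a$ for every $a \neq (n+1)/2$, the excluded value being the fixed point of the reflection, which occurs only when $n$ is odd.
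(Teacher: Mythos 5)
This theorem is quoted from Godsil, Kirkland, Severini, and Smith~\cite{GKSS12}; the paper you were given contains no proof of it, so there is nothing internal to compare against. Your outline is, in substance, the strategy of that reference: the sine eigenvectors, the signs $(-1)^{r+1}$ giving strong cospectrality of the end vertices, the reduction via Kronecker's theorem to the parity of integer relations among the $\lambda_r = 2\cos(r\pi/(n+1))$ (this is exactly the paper's Lemma~\ref{lem:eig} with $P_-$ the even-index eigenvalues), and the passage to vanishing sums of $2(n+1)$-th roots of unity. The framing via the sublattice $\Lambda_n^0$ of pairing relations is sound, and your check that the parity functional vanishes on $\Lambda_n^0$ is correct in both parities of $n+1$. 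The ``moreover'' clause is also handled correctly: the sign $(-1)^{r+1}$ is independent of $a$, so a bad relation supported on the eigenvalue support of an internal vertex pads by zeros to a bad relation for the end vertices.

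However, as a proof the proposal has a genuine gap: both halves of the characterization are deferred at precisely the points where the number theory happens. For sufficiency you assert, without argument, that the first-half eigenvalues are $\mathbb{Q}$-linearly independent when $n+1 \in \{p,\,2p,\,2^m\}$ (``a Chebyshev/Vandermonde argument\ldots a short direct argument''); this is what forces $\Lambda_n = \Lambda_n^0$, and your rank count as written (``$\mathrm{rank}(\Lambda_n)$ equals $n$ minus $\dim\mathrm{span}$'') is off by the possible extra constraint from the augmentation, which does in fact cut the rank by one in the prime case. For necessity you explicitly acknowledge that you have not constructed, for each excluded $n+1$, a zero-augmentation relation with odd even-index sum, and you correctly identify this as ``the main obstacle'': it is the entire content of the ``only if'' direction, and it is where the asymmetric list $\{p, 2p, 2^m\}$ (e.g., why $2p$ is allowed but $4p$ and $pq$ are not) is actually established, via the structure of vanishing sums of roots of unity for composite moduli. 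Until those two steps are carried out, what you have is a correct plan rather than a proof.
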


Coutinho, Guo, and van Bommel~\cite{CGvB17} determined an infinite family of paths which exhibit pretty good state transfer between internal vertices but not between the end vertices, and van Bommel~\cite{vB19} completed the characterization by showing there were no additional examples, leading to the following result.

\begin{theorem} \label{thm-pgst-path} \cite{vB19}
There is pretty good state transfer on $P_n$ between vertices $a$ and $b$ if and only if $a + b = n + 1$ and either:
\begin{enumerate}[a)]
\item $n = 2^t - 1$, where $t$ is a positive integer;
\item $n = p - 1$, where $p$ is an odd prime; or, 
\item $n = 2^t p - 1$, where $t$ is a positive integer and $p$ is an odd prime, and $a$ is a multiple of $2^{t - 1}$.
\end{enumerate}
\end{theorem}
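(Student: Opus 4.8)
The plan is to reduce pretty good state transfer on $P_n$ to an arithmetic question about the eigenvalues $\theta_k = 2\cos(\pi k /(n+1))$ and then to resolve that question by a case analysis on the factorization of $n+1$.

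First, strong cospectrality of the two vertices is necessary for pretty good state transfer, and the only strongly cospectral pairs of $P_n$ are the pairs $\{a, n+1-a\}$; this forces $a+b = n+1$, so write $b = n+1-a$ and $N = n+1$. The eigenvectors of $P_n$ are the sine vectors, so one reads off the two pieces of data needed for the pair $(a, N-a)$: the eigenvalue support of $a$ is $S_a = \{k \in \{1,\dots,N-1\} : N \nmid ak\}$, which depends only on $g := \gcd(a, N)$ (explicitly $S_a = \{k : (N/g) \nmid k\}$, which is the spectrum of $P_n$ with the spectrum of $P_{g-1}$ deleted), and in the $\theta_k$-eigenvector the entries at $a$ and at $N-a$ have opposite sign exactly when $k$ is even. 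Substituting this into the known criterion for pretty good state transfer between strongly cospectral vertices (phrased in terms of integer linear relations among the eigenvalues of the support and their signs), the statement to prove becomes: pretty good state transfer occurs between $a$ and $N-a$ if and only if there is no integer tuple $(\ell_k)_{k \in S_a}$ with $\sum_k \ell_k \theta_k = 0$, $\sum_k \ell_k = 0$, and $\sum_{k \in S_a,\ k \text{ even}} \ell_k$ odd. Everything now depends only on $N$ and $g$, so it suffices to decide, for each divisor $g$ of $N$, whether such a \emph{parity-violating} relation exists.

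I would analyse these relations via roots of unity. Writing $\zeta$ for a primitive $2N$-th root of unity, $\theta_k = \zeta^k + \zeta^{-k}$, so the rational linear relations among the $\theta_k$ are governed by the cyclotomic polynomials. Two families are visible at once: $\theta_k + \theta_{N-k} = 0$ for every $k$; and, for each divisor $d \ge 3$ of $2N$, the $\theta_k$ with $\gcd(k, 2N) = 2N/d$ form a Galois orbit whose sum is the integer $\mu(d)$. When $N$ is odd, the first family already contributes an odd amount to $\sum_{k \text{ even}} \ell_k$ (exactly one of $k$, $N-k$ is even), while suitable differences of the orbit-sum relations (whose sums are integers, some equal to $\pm 1$, so $\sum_k \ell_k$ can be corrected to zero without leaving homogeneity) complete a parity-violating relation; this works whenever $N$ is odd and composite, for every $g$. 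When $N$ is even, the even-index eigenvalues $\theta_{2j} = 2\cos(\pi j /(N/2))$ are precisely the eigenvalues of $P_{N/2 - 1}$, which opens the way to a descent; the interplay between the even-index and odd-index contributions to a relation is controlled by the degree-two extension $\mathbb{Q}(\theta_1) \supsetneq \mathbb{Q}(\theta_2)$, and carrying the $2$-adic bookkeeping through the descent gives the dichotomy: a parity-violating relation exists for $(N, g)$ unless $N$ is a power of $2$ or an odd prime, or $N = 2^t p$ with $p$ an odd prime and $v_2(g) \ge t - 1$. Unwinding $v_2(g) \ge t - 1$ through $g = \gcd(a, N)$ gives precisely ``$a$ is a multiple of $2^{t-1}$''.

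The main obstacle is this last arithmetic step: describing the full lattice of relations accurately enough to decide the parity of $\sum_{k \text{ even}} \ell_k$ over all of it. The most natural candidate relations, differences of two orbit-sum relations with matching $\phi(d)$ and $\mu(d)$, turn out to have even parity whenever $N$ has two distinct odd prime factors, so one must genuinely combine the orbit relations with the relations $\theta_k + \theta_{N-k} = 0$ and keep tight control of the $2$-adic arithmetic, uniformly in $g$; and in the borderline family $N = 2^t p$ one must pin down exactly when deleting the spectrum of $P_{g-1}$ from the support destroys every parity-violating relation, which is the source of the divisibility condition on $a$. Finally, the cases where $N$ is a power of $2$, $N$ is an odd prime, or $N = 2^t p$ with $t = 1$ are precisely those in which Theorem~\ref{pgst-path-end} already gives pretty good state transfer between the end vertices, and its ``moreover'' clause then settles them for all admissible $a$; the genuinely new content, beyond Theorem~\ref{pgst-path-end} and the family found by Coutinho, Guo, and van Bommel, is the precise analysis of $N = 2^t p$ with $t \ge 2$ and the verification that no other value of $N$ admits pretty good state transfer for any $a$.
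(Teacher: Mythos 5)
The paper does not prove this theorem at all: it is quoted verbatim from \cite{vB19} (building on \cite{GKSS12} and \cite{CGvB17}), so there is no internal proof to compare against. Measured against how the result is actually established in those sources, your reduction is the standard and correct one: strong cospectrality forces $b=n+1-a$; the support of $a$ is $\{k : (N/g)\nmid k\}$ with $g=\gcd(a,N)$; the sign at $N-a$ is $(-1)^{k+1}$ times the sign at $a$; and Lemma~\ref{lem:eig} turns the problem into deciding, for each pair $(N,g)$, whether an integer relation $\sum_k \ell_k\theta_k=0$, $\sum_k\ell_k=0$ exists with $\sum_{k\ \mathrm{even}}\ell_k$ odd. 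All of that is right and matches the literature.

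The genuine gap is that the decisive arithmetic is described rather than carried out, and it is needed in both directions of the equivalence. For the positive cases (a), (b), (c) you must show that \emph{no} parity-violating relation exists, which requires a complete description of the lattice of integer relations among $\{2\cos(k\pi/N)\}_{k\in S_a}$; you list two visible families (the antipodal relations $\theta_k+\theta_{N-k}=0$ and the Galois orbit sums equal to $\mu(d)$) but never prove, or cite a result implying, that these generate everything. The standard way to close this is via the structure theory of vanishing sums of roots of unity (relations among primitive $d$-th roots are generated by sums over cosets of prime-order subgroups), which is exactly the tool used in \cite{GKSS12} and \cite{vB19}; without some such input the ``no violating relation exists'' half is unsupported. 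For the negative cases you must exhibit a violating relation for every failing $(N,g)$ --- every even $N$ with two distinct odd prime factors, every $N=2^tp^j$ with $j\ge 2$, and $N=2^tp$ with $2^{t-1}\nmid a$ --- and you concede that the natural candidates have even parity when $N$ has two distinct odd prime factors, deferring the repair to unspecified ``$2$-adic bookkeeping''; similarly the descent from $N$ to $N/2$ via $\theta_{2j}$ is only named. Your observation that Theorem~\ref{pgst-path-end} already disposes of cases (a), (b), and $t=1$ of (c) is correct and usefully isolates the new content, but the core of that new content --- the exact analysis of $N=2^tp$ with $t\ge 2$ and the exclusion of all other $N$ --- is precisely what remains unproven in your write-up.
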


Casaccino, Llyod, Mancini, and Severini~\cite{CLMS09} observed numerical evidence suggesting pretty good state transfer between the endpoints of a path by adding an appropriate potential at the endpoints.  Kempton, Lippner, and Yau~\cite{KLY17b} verified this observation through the following result.

\begin{theorem} \cite{KLY17b}
Given a path $P_n$, there is some choice of $Q$ such that by placing the value $Q$ as a potential on each endpoint of $P_n$ there is pretty good state transfer between the endpoints.
\end{theorem}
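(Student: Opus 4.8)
The plan is to obtain strong cospectrality for free from the symmetry of the configuration, invoke the Diophantine characterisation of pretty good state transfer, and then choose $Q$ so that the resulting spectrum carries no ``obstructing'' integer relation. Write $H(Q)$ for the matrix obtained from the adjacency matrix of $P_n$ by adding $Q$ to the $(1,1)$ and $(n,n)$ entries. This is tridiagonal with every off-diagonal entry equal to $1$, i.e.\ a Jacobi matrix, so its eigenvalues $\theta_1 < \dots < \theta_n$ are simple and every eigenvector has nonzero first and last entry; in particular vertices $1$ and $n$ lie in the eigenvalue support for every $Q$. The reflection $i \mapsto n{+}1{-}i$ is an automorphism of the weighted graph and commutes with $H(Q)$, so by simplicity each eigenvector $v_j$ is fixed or negated by it, whence $(v_j)_n = \sigma_j (v_j)_1$ with $\sigma_j \in \{\pm1\}$; this is exactly the assertion that $1$ and $n$ are strongly cospectral, and $\sigma_j = +1$ occurs precisely for the eigenvalues of the ``symmetric'' part. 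By the known characterisation of pretty good state transfer between strongly cospectral vertices, transfer occurs between $1$ and $n$ if and only if there is no integer vector $(\ell_j)_{j=1}^n$ with $\sum_j \ell_j\theta_j = 0$, $\sum_j \ell_j = 0$, and $\sum_{j:\sigma_j=-1}\ell_j$ odd. It therefore suffices to produce a single $Q$ for which no such relation exists.

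Using the symmetry again, $H(Q)$ is conjugate to a block matrix $H_+(Q)\oplus H_-(Q)$, where $H_\pm(Q)$ are Jacobi matrices of sizes $m_+$ and $m_-$ (each about $n/2$); routine Sturm--Liouville recursions (equivalently, Chebyshev identities) give $\chi_\pm(x,Q) := \det(xI-H_\pm(Q)) = g_\pm(x) - Q\,h_\pm(x)$ with $g_\pm, h_\pm \in \mathbb{Z}[x]$, together with $\tr H_\pm(Q) = Q + \epsilon_\pm$ for small explicit constants $\epsilon_\pm$, and one checks that the $2\times2$ matrix with rows $(\tr H_+,\tr H_-)$ and $(m_+,m_-)$ is nonsingular for all $Q$ outside a finite set. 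The eigenvalues of $H(Q)$ are the roots of $\chi_+(x,Q)\chi_-(x,Q)$, partitioned into the $\sigma={+}1$ block (roots of $\chi_+$) and the $\sigma={-}1$ block (roots of $\chi_-$). Now suppose $Q$ is chosen so that $\mathrm{Gal}(\chi_+(x,Q)\chi_-(x,Q)\,/\,\mathbb{Q})$ is the full product $S_{m_+}\times S_{m_-}$ acting on these two blocks. Then for any integer relation $\sum_j \ell_j\theta_j = 0$: applying a transposition of two roots of a single block (fixing all other roots) and subtracting forces those two coefficients to agree, so $\ell_j$ is a constant $a$ on the first block and a constant $b$ on the second; the relation becomes $a\,\tr H_+(Q) + b\,\tr H_-(Q) = 0$, and the constraint $\sum_j \ell_j = 0$ reads $a\,m_+ + b\,m_- = 0$, so nonsingularity forces $a = b = 0$. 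Thus no nonzero relation with $\sum_j\ell_j = 0$ exists, a fortiori no obstructing one, and pretty good state transfer holds between $1$ and $n$. To meet the Galois hypothesis with an explicit $Q$ one may take $Q$ transcendental, so that $\mathbb{Q}(Q)\cong\mathbb{Q}(t)$ and the Galois group over $\mathbb{Q}$ equals the generic Galois group of $\chi_+(x,t)\chi_-(x,t)$ over $\mathbb{Q}(t)$; alternatively one may take $Q$ rational outside the thin set furnished by Hilbert's irreducibility theorem. Either way, the only remaining point is to compute this generic Galois group.

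The main obstacle is therefore the claim $\mathrm{Gal}(\chi_+(x,t)\chi_-(x,t)\,/\,\mathbb{Q}(t)) = S_{m_+}\times S_{m_-}$. Since $\chi_\pm(x,t) = g_\pm(x) - t\,h_\pm(x)$, the splitting field of $\chi_\pm$ over $\mathbb{Q}(t)$ is the function field of the curve $t = g_\pm(x)/h_\pm(x)$, and its Galois group is the monodromy group of this degree-$m_\pm$ rational map; the group is the full symmetric group provided every critical point is simple and the critical values are pairwise distinct, and the two splitting fields are linearly disjoint provided moreover the critical values of the two maps are disjoint from one another. Setting $x = 2\cos\theta$ one has the closed forms $g_+/h_+ = \frac{\cos((m+\tfrac12)\theta)}{\cos((m-\tfrac12)\theta)}$ and $g_-/h_- = \frac{\sin((m+\tfrac12)\theta)}{\sin((m-\tfrac12)\theta)}$, whose critical points lie among the solutions of $\sin(2m\theta) = \mp 2m\sin\theta$; the delicate, if entirely elementary, part of the argument is to verify, uniformly in $n$ and separately for $n$ even and odd, that these critical points are non-degenerate and that the associated critical values are all distinct. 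The handful of small cases in which $\chi_\pm$ are linear (essentially $n \le 3$) are classical and are covered by taking $Q = 0$.
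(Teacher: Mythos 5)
The paper does not prove this statement---it is quoted from Kempton, Lippner, and Yau---but the machinery it records (Lemma~\ref{invol_factor}, Corollary~\ref{loop-factor}, and the trace criterion of Theorem~\ref{pgst-asym}) is exactly the intended argument, so I will measure your proposal against that. Your setup is sound and matches that route: simplicity of the Jacobi spectrum, nonvanishing of eigenvectors at the endpoints, strong cospectrality from the reflection, the decomposition $\Pi_\pm(x) = p_\pm(x) - Q\,q_\pm(x)$, and the reduction of any Galois-invariant integer relation to the $2\times 2$ system in $(\tr H_+, m_+)$, $(\tr H_-, m_-)$.

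The genuine gap is the claim $\mathrm{Gal}\bigl(\chi_+\chi_-/\mathbb{Q}(t)\bigr) = S_{m_+}\times S_{m_-}$, which your transposition argument requires (you need a transposition inside one block that fixes \emph{every} other root, including all roots of the other block, so you need both full symmetric monodromy on each factor and linear disjointness of the two splitting fields). You reduce this to non-degeneracy of critical points and distinctness of critical values of two explicit trigonometric rational maps, uniformly in $n$, and then stop---but this is the entire content of the theorem on your approach, it is not a routine verification, and it is far from obvious (for $Q=0$ the analogous Galois groups are abelian, so the claim genuinely depends on $Q$ being generic and must be argued). Moreover, the full symmetric group is much more than you need. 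Condition 2 of Lemma~\ref{lem:eig} only requires $\sum_j m_j$ to be even, and for that \emph{transitivity} on each block suffices: averaging the relation $\sum_i \ell_i\lambda_i + \sum_j m_j\mu_j = 0$ over the Galois group replaces each root by the mean $\tr(\Pi_\pm)/\deg(\Pi_\pm)$ of its block, and combined with $\sum_i\ell_i + \sum_j m_j = 0$ and the inequality $\tr(\Pi_+)/\deg(\Pi_+) \neq \tr(\Pi_-)/\deg(\Pi_-)$ (which holds here since the traces differ by a nonzero constant while the degrees differ by at most one) this forces $\sum_j m_j = 0$. That is Theorem~\ref{pgst-asym}. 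Transitivity is just irreducibility of $\Pi_\pm$ over $\mathbb{Q}(Q)$, and for $Q$ transcendental this is immediate from Gauss's lemma once $\gcd(p_\pm, q_\pm) = 1$: a polynomial of degree one in $Q$ with coprime coefficients in $\mathbb{Q}[x]$ is irreducible in $\mathbb{Q}[x][Q]$, hence in $\mathbb{Q}(Q)[x]$. So the coprimality of $p_\pm$ and $q_\pm$ (a concrete statement about Chebyshev-type polynomials of consecutive degrees) is the only thing that actually needs checking, and your proof as written replaces it with a substantially harder, unverified monodromy computation.
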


They also provide the following more general result on graphs with an involution (an automorphism of the graph of order two).

\begin{theorem} \cite{KLY17b}
Let $G$ be a connected graph with an involution $\sigma$ and let $Q: V(G) \to \mathbb{R}$ be a potential on the vertex set satisfying $Q(x) = Q(\sigma x)$.  Let $u$ and $v$ be vertices with $v = \sigma u$ and $u \neq v$.  Then if $\sigma$ fixes any vertices or any edges of $G$, then there is a choice of potential $Q$ for which there is pretty good state transfer from $u$ to $v$.
\end{theorem}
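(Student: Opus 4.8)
The plan is to use the involution to split the evolution into two orthogonal ``sectors,'' reduce pretty good state transfer to a Kronecker density statement about the two sectors' eigenvalues, and then choose the symmetric potential generically so that those eigenvalues carry no rational linear relation at all; the hypothesis on fixed vertices or edges enters precisely in guaranteeing that such a choice exists. Concretely, write $H = A(G) + \mathrm{diag}(Q)$ and let $T$ be the permutation matrix of $\sigma$, so $T^2 = I$ and $TH = HT$ because $Q\circ\sigma = Q$. Split $\mathbb{R}^{V(G)} = \mathcal{S}\oplus\mathcal{A}$ into the $(+1)$- and $(-1)$-eigenspaces of $T$. Since $\sigma u = v \neq u$, we may write $\ee_u = s+a$ and $\ee_v = s-a$ with $s = \tfrac12(\ee_u+\ee_v)\in\mathcal{S}$ and $a = \tfrac12(\ee_u-\ee_v)\in\mathcal{A}$, both nonzero. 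As $H$ preserves $\mathcal{S}$ and $\mathcal{A}$, write $H_+, H_-$ for its restrictions; then $e^{-\mathrm{i}tH}\ee_u = e^{-\mathrm{i}tH_+}s + e^{-\mathrm{i}tH_-}a$, and, using $\mathcal{S}\perp\mathcal{A}$,
\[
\bigl\| e^{-\mathrm{i}tH}\ee_u - \gamma\,\ee_v \bigr\|^2 \;=\; \bigl\| e^{-\mathrm{i}tH_+}s - \gamma\,s \bigr\|^2 \;+\; \bigl\| e^{-\mathrm{i}tH_-}a + \gamma\,a \bigr\|^2 .
\]
Hence pretty good state transfer from $u$ to $v$ holds if and only if for every $\varepsilon > 0$ there are $t\in\mathbb{R}$ and unimodular $\gamma$ with $e^{-\mathrm{i}tH_+}s$ within $\varepsilon$ of $\gamma s$ and $e^{-\mathrm{i}tH_-}a$ within $\varepsilon$ of $-\gamma a$.

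Next I would pass to spectra. Let $\mu_1,\dots,\mu_p$ be the eigenvalues of $H_+$ whose eigenprojections do not annihilate $s$, and $\nu_1,\dots,\nu_q$ those of $H_-$ not annihilating $a$. Expanding $s$ and $a$ in eigenbases, the condition above is equivalent to being able to make $e^{-\mathrm{i}t\mu_j}$ simultaneously $\varepsilon$-close to a single unimodular $\gamma$ and $e^{-\mathrm{i}t\nu_k}$ $\varepsilon$-close to $-\gamma$. By the Kronecker--Weyl theorem for continuous flows, if $\mu_1,\dots,\mu_p,\nu_1,\dots,\nu_q$ are linearly independent over $\mathbb{Q}$ then $\{(t\mu_1,\dots,t\mu_p,t\nu_1,\dots,t\nu_q)\bmod 2\pi\}_{t\in\mathbb{R}}$ is dense in the full torus, so we may hit $(\gamma,\dots,\gamma;-\gamma,\dots,-\gamma)$ for any $\gamma$, and the displayed identity then yields pretty good state transfer from $u$ to $v$. (Linear independence in particular forces $\operatorname{spec}(H_+)$ and $\operatorname{spec}(H_-)$ to be disjoint, which makes $u$ and $v$ strongly cospectral, with sign $+1$ on the $\mathcal{S}$-eigenvalues and $-1$ on the $\mathcal{A}$-eigenvalues, so this is precisely the benign case of the standard parity criterion.) Linear independence of the full spectrum of $H_+$ together with the full spectrum of $H_-$ is more than enough, so it suffices to exhibit a $\sigma$-invariant potential $Q$ for which $\operatorname{spec}(H_+)\cup\operatorname{spec}(H_-)$ is $\mathbb{Q}$-linearly independent.

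The crux is the last step, and it is where the hypothesis does its work. Identify $H_+$ and $H_-$ with weighted quotient graphs on the $\sigma$-orbits: each fixed vertex $w$ contributes a vertex of its own to $H_+$ carrying the free potential value $Q(w)$, with no counterpart in $H_-$, and each fixed edge $\{w,\sigma w\}$ gives the corresponding orbit-vertex a loop weight $+1$ in $H_+$ but $-1$ in $H_-$; every vertex of each quotient has its own free potential parameter, shared across the two quotients only on non-fixed orbits. The decisive point is that, when $\sigma$ fixes a vertex or an edge, $\tr H_+ - \tr H_-$ is \emph{not} the zero function of the potential parameters (it depends on $Q(w)$, respectively it is a nonzero constant), whereas with no fixed vertex or edge one has $\tr H_+ \equiv \tr H_-$, forcing the relation $\sum_j\mu_j - \sum_k\nu_k = 0$ with odd parity when the common size $g$ is odd, which genuinely obstructs transfer --- this is why the hypothesis is needed. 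Under the hypothesis one runs a genericity argument: the eigenvalues of $H_+$ and of $H_-$ are real-analytic in the finitely many parameters, and for each fixed nonzero integer coefficient vector the corresponding linear combination either vanishes identically or only on a proper real-analytic (measure-zero) locus. To rule out the identical case: a combination depending only on which quotient an eigenvalue comes from reduces to $c^{+}\tr H_+ + c^{-}\tr H_-$, which is non-identically-zero by the previous sentence; any finer relation would force a nontrivial rational factorization of $\det(xI-H_\pm)$, or a rational coincidence between $\det(xI-H_+)$ and $\det(xI-H_-)$, which one precludes by an explicit computation of the spectra for suitably large potentials. A countable intersection then leaves a full-measure set of $\sigma$-invariant potentials $Q$ for which $\operatorname{spec}(H_+)\cup\operatorname{spec}(H_-)$ is $\mathbb{Q}$-linearly independent, and any such $Q$ works.

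The main obstacle I expect is exactly this last non-vanishing claim: certifying that no rational linear combination of the two quotient spectra is an identity in the potential beyond the obvious trace combinations. This requires a transitivity (Galois/monodromy) statement for the eigenvalues of a weighted graph with free potentials and the generic coprimality of the two quotients' characteristic polynomials; the fixed vertex or edge is what removes the one genuine identity --- the trace relation --- that would otherwise survive. Handling this carefully, including the bookkeeping when the antisymmetric quotient happens to be disconnected, is where the real work lies.
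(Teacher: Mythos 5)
Your overall architecture---splitting $H$ into the symmetric and antisymmetric sectors $H_\pm$ via the involution, and reducing pretty good state transfer to simultaneously approximating $e^{-\mathrm{i}t\mu_j}\approx\gamma$ on one sector and $e^{-\mathrm{i}t\nu_k}\approx-\gamma$ on the other---is exactly the framework the paper imports from Kempton--Lippner--Yau (Lemma~\ref{invol_factor} combined with Lemma~\ref{lem:eig}), and your diagnosis of why a fixed vertex or edge matters (it is what breaks the identity $\tr H_+=\tr H_-$ and hence the odd-parity obstruction of Proposition~\ref{prop:evenodd}) is correct. The gap is in your final step. You want a $\sigma$-symmetric potential on \emph{all} vertices making the entire combined spectrum of $H_+$ and $H_-$ linearly independent over $\mathbb{Q}$, via a countable-intersection genericity argument. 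That argument stands or falls on the claim that no fixed nonzero integer combination of eigenvalues vanishes \emph{identically} in the potential parameters except for combinations of the two traces---and you explicitly defer this (``a transitivity (Galois/monodromy) statement \dots is where the real work lies''). That deferred claim is the entire content of the theorem: it amounts to showing that the generic characteristic polynomials of $H_+$ and $H_-$ are coprime and that their Galois groups over the parameter field admit no invariant linear functional on the roots other than the trace, and it is not obviously uniform over all graphs with an involution (additional automorphisms or disconnected quotients can create identically coincident or rationally related eigenvalue branches that a merely $\sigma$-symmetric potential need not separate without further argument). There is also a technical wrinkle: individual eigenvalue branches are not globally analytic in several parameters, so the ``vanishes identically or on a measure-zero set'' dichotomy must be run on a symmetrized polynomial such as $\prod_{\pi}\bigl(\sum_i c_i\lambda_{\pi(i)}\bigr)$ rather than on the branches themselves.

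The proof the paper points to is less greedy and avoids all of this. One places a single \emph{transcendental} potential $Q$ on $u$ and $\sigma u$ only; then $\Pi_\pm(x)=p_\pm(x)-Qq_\pm(x)$ is linear in $Q$ with $\gcd(p_\pm,q_\pm)=1$, so irreducibility over $\mathbb{Q}(Q)$ comes essentially for free (this is the setting of Corollary~\ref{loop-factor}). Irreducibility yields only \emph{transitivity} of the Galois action, which suffices: averaging any integer relation $\sum_i\ell_i\lambda_i+\sum_jm_j\mu_j=0$ over the Galois group collapses it to $\frac{\tr(P_+)}{\deg(P_+)}\sum_i\ell_i+\frac{\tr(P_-)}{\deg(P_-)}\sum_jm_j=0$, and the presence of a fixed vertex or edge makes the two normalized traces differ, forcing $\sum_j m_j=0$, which is even (Theorem~\ref{pgst-asym}). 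So the intended route needs only a one-parameter irreducibility statement plus a trace computation, whereas yours requires full $\mathbb{Q}$-linear independence of the spectrum---a strictly stronger input that your writeup asserts but does not establish. To complete your argument you would either have to prove the Galois-theoretic genericity statement in full generality, or retreat to the irreducibility-plus-normalized-trace route.
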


Godsil, Guo, Kempton, and Lippner~\cite{GGKLM20} considered the problem of achieving perfect and pretty good state transfer in strongly regular graphs.  A graph $G$ is \emph{strongly regular} if it is neither empty nor complete, it is $k$-regular for some integer $k$, every pair of adjacent vertices has $a$ common neighbours, and every pair of nonadjacent vertices has $c$ common neighbours.  Strongly regular graphs are characterized as the set of regular graphs with exactly three (distinct) eigenvalues; we will denote these eigenvalues by $k$, $\theta$, and $\tau$.  The authors modify a strongly regular graph $G$ by choosing a pair of vertices $x, y$, adding a weighted edge between $x$ and $y$ with weight $\beta$, and adding weighted loops to the vertices $x$ and $y$ with weight $\gamma$, denoting the resulting graph $G^{\beta, \gamma}$ and observe the following.

\begin{theorem} \cite{GGKLM20}
Let $G$ be a strongly regular graph with eigenvalues $k, \theta, \tau$.  Then if $k \equiv \theta \equiv \tau \pmod{4}$ is odd (resp.\ even), then for any pair of nonadjacent (resp.\ adjacent) vertices $x, y$ of $G$ there is a choice of $\beta, \gamma$ such that there is perfect state transfer between $x$ and $y$ in $G^{\beta, \gamma}$.
\end{theorem}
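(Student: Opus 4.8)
The plan is to reduce everything to a matrix of size $5$, exploit the $x\leftrightarrow y$ symmetry to split it, and then pick $\beta,\gamma$ so that the two resulting eigenvalue sets fall into complementary residue classes modulo $4$. (The statement already forces $k,\theta,\tau\in\mathbb Z$, since a strongly regular graph with irrational eigenvalues is a conference graph, for which the hypothesis is vacuous.) Because $G$ is strongly regular, $A^2 e_u \in \mathrm{span}\{e_u, Ae_u, \mathbf{1}\}$ for every vertex $u$, and from this one checks directly that $W := \mathrm{span}\{e_x, e_y, \mathbf{1}, Ae_x, Ae_y\}$ satisfies $A^{\beta,\gamma}W\subseteq W$ for all $\beta,\gamma$; as $e_x,e_y\in W$, there is perfect state transfer between $x$ and $y$ in $G^{\beta,\gamma}$ iff the same holds for $M:=A^{\beta,\gamma}|_W$. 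The linear involution of $W$ fixing $\mathbf{1}$ and interchanging $e_x\leftrightarrow e_y$, $Ae_x\leftrightarrow Ae_y$ commutes with $M$, so $W = W_+\oplus W_-$ with $W_+=\mathrm{span}\{e_x+e_y, A(e_x+e_y),\mathbf{1}\}$ and $W_-=\mathrm{span}\{e_x-e_y, A(e_x-e_y)\}$, both $M$-invariant; writing $r=\beta+\gamma$, $s=\gamma-\beta$, I would record the explicit $3\times3$ and $2\times2$ matrices $M_\pm=M|_{W_\pm}$, whose entries are linear in $r,s$ and polynomial in $k,\mu,\lambda$, hence (via $\theta\tau=\mu-k$, $\theta+\tau=\lambda-\mu$) in $k,\theta,\tau,r,s$.

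\textbf{Spectral criterion.} Since $\mu\ge 1$, the vectors $e_x\pm e_y$ are cyclic for $M_\pm$, so $M_+$ has three distinct eigenvalues, $M_-$ has two, and $e_x\pm e_y$ have full eigenvalue support in $W_\pm$. Hence, provided $M_+$ and $M_-$ have disjoint spectra, $x$ and $y$ are strongly cospectral in $G^{\beta,\gamma}$, the eigenvalues of $M_+$ being the ``$+$'' eigenvalues (those $\lambda$ with $E_\lambda e_x=E_\lambda e_y$) and those of $M_-$ the ``$-$'' ones, and the eigenvalue support of $e_x$ is exactly $\mathrm{spec}(M_+)\cup\mathrm{spec}(M_-)$. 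Therefore $U(t)e_x=\omega e_y$ with $U(t)=\exp(-\mathrm it A^{\beta,\gamma})$ and $|\omega|=1$ holds precisely when $e^{-\mathrm it\lambda}=\omega$ for every $\lambda\in\mathrm{spec}(M_+)$ and $e^{-\mathrm it\lambda}=-\omega$ for every $\lambda\in\mathrm{spec}(M_-)$. I would aim for the sufficient condition that $\mathrm{spec}(M_+)$ lies in one residue class $a\pmod 4$ and $\mathrm{spec}(M_-)$ in the other residue class of that parity; then $t=\pi/2$ works, since $\lambda\mapsto e^{-\mathrm i\pi\lambda/2}$ is constant on each class mod $4$ and the two constant values differ by a sign.

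\textbf{The choice of $\beta,\gamma$, where the hypothesis enters.} Take $r=\beta+\gamma=0$. Then $\mathbf{1}$ is again a $k$-eigenvector and $M_+$ becomes block triangular with diagonal blocks of characteristic polynomials $(z-k)$ and $(z-\theta)(z-\tau)$, so $\mathrm{spec}(M_+)=\{k,\theta,\tau\}$, a single residue class $a\pmod 4$ of the stated parity by hypothesis. It remains to choose $s=\gamma-\beta$ so that the eigenvalues $\rho_1,\rho_2$ of $M_-$ lie in the other residue class of that parity. A short computation gives $(\rho_1-(\theta+\tau))(\rho_2-(\theta+\tau))=\theta\tau$ in the nonadjacent (odd) case, and $(\rho_1-(\theta+\tau+1))(\rho_2-(\theta+\tau+1))=(\theta+1)(\tau+1)$ in the adjacent (even) case, so each factorization into two integer factors of $N:=\theta\tau$ (resp.\ $N:=(\theta+1)(\tau+1)$) yields an admissible pair $(\rho_1,\rho_2)$ and a corresponding $s$. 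Choosing the factorization $1\cdot N$ when $a$ is the smaller class ($a\equiv 1$ or $0$) and $(-1)\cdot(-N)$ when $a$ is the larger class ($a\equiv 3$ or $2$), a congruence check — using that $k\equiv\theta\equiv\tau\pmod 4$ forces $\theta+\tau\equiv 2$ and $\theta\tau\equiv 1\pmod 4$ in the odd case, and $\theta+\tau\equiv 0$ and $(\theta+1)(\tau+1)\equiv 1\pmod 4$ in the even case — shows that $\rho_1,\rho_2$ are distinct (using $\theta\tau\ne 1$, resp.\ $\lambda\ne k$), of the required parity, and both $\equiv a+2\pmod 4$, hence in the complementary class and disjoint from $\{k,\theta,\tau\}$. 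The spectral criterion then gives perfect state transfer between $x$ and $y$ in $G^{\beta,\gamma}$ at time $\pi/2$.

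\textbf{Main obstacle.} The delicate point is this last step: the $\pmod 4$ bookkeeping — choosing the factorization so that $\rho_1,\rho_2$ land in the complementary class rather than back with $\{k,\theta,\tau\}$ — is exactly where the parity-and-congruence hypothesis is used and where it is easy to be off by a class. A few degenerate configurations (e.g.\ $x,y$ twins, so $\dim W_-=1$, or $\{x,y\}\cup N(x)\cup N(y)=V(G)$, so $\dim W_+$ may drop to $2$) would need separate, but easier, treatment, since then one of the two eigenvalue sets is a single value that can simply be prescribed.
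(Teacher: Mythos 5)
The paper does not actually prove this theorem: it is quoted verbatim from \cite{GGKLM20} as background in the introduction, so there is no internal proof to compare your attempt against; the comparison can only be with the cited source, whose argument yours essentially reconstructs. Judged on its own, your sketch is sound: $W=\mathrm{span}\{\ee_x,\ee_y,\mathbf{1},A\ee_x,A\ee_y\}$ is indeed $A^{\beta,\gamma}$-invariant because $A^2\ee_u\in\mathrm{span}\{\ee_u,A\ee_u,\mathbf{1}\}$ for a strongly regular graph; the choice $\beta+\gamma=0$ makes the perturbation vanish identically on the symmetric part $W_+$ (for $w$ with $w_x=w_y$ the perturbation returns $(\beta+\gamma)w_x(\ee_x+\ee_y)=0$), so $\mathrm{spec}(M_+)\subseteq\{k,\theta,\tau\}$ lies in a single class mod $4$; and the two identities you rely on for $M_-$, namely $(\rho_1-(\theta+\tau))(\rho_2-(\theta+\tau))=\theta\tau$ in the nonadjacent case and $(\rho_1-(\theta+\tau+1))(\rho_2-(\theta+\tau+1))=(\theta+1)(\tau+1)$ in the adjacent case, check out, as does the congruence bookkeeping ($\theta\tau\equiv 1$, $\theta+\tau\equiv 2$ mod $4$ in the odd case; $(\theta+1)(\tau+1)\equiv 1$, $\theta+\tau\equiv 0$ in the even case) and the distinctness conditions ($\theta\tau=\mu-k<0\ne 1$ since $\mu=k$ would force $\theta=0$, excluded in the odd case; $(\theta+1)(\tau+1)=\lambda-k+1\le -1\ne 1$). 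The only items left open are the degenerate configurations you already flag ($\dim W_-=1$ for twins, $\dim W_+<3$), which are genuinely easier since one of the two spectra collapses to a single prescribable value; a complete write-up must include them (and should reconcile your $\lambda,\mu$ with the paper's $a,c$ for the strongly regular parameters), but there is no gap in the main line of reasoning.
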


\begin{theorem} \cite{GGKLM20}
Let $G$ be a strongly regular graph.  Then for any pair of vertices $x, y$ of $G$, there is a choice of $\beta, \gamma$ such that there is pretty good state transfer between $x$ and $y$ in $G^{\beta, \gamma}$.
\end{theorem}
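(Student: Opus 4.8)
The plan is to split the statement into two parts — strong cospectrality of $x$ and $y$ in $G^{\beta,\gamma}$, and a Kronecker-type arithmetic condition on the eigenvalues in the support of $\ee_x$ — and then to meet both with a single convenient choice of $(\beta,\gamma)$; in fact $\beta=-\gamma$ will do, except for one degenerate family. Write $A$ for the adjacency matrix of $G$, set $\epsilon=1$ if $x\sim y$ and $\epsilon=0$ otherwise, and let $M=A+\beta(\ee_x\ee_y^{\top}+\ee_y\ee_x^{\top})+\gamma(\ee_x\ee_x^{\top}+\ee_y\ee_y^{\top})$ be the weighted adjacency matrix of $G^{\beta,\gamma}$. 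First I would record an $M$-invariant orthogonal decomposition $\mathbb{R}^{V(G)}=U_+\oplus U_-\oplus W$, where $U_-=\mathrm{span}\{\ee_x-\ee_y,\,A(\ee_x-\ee_y)\}$ and $U_+=\mathrm{span}\{\mathbf{1},\,\ee_x+\ee_y,\,A(\ee_x+\ee_y)\}$: using $A^{2}=(k-c)I+(a-c)A+cJ$ one checks that $U_\pm$ are invariant under both $A$ and $J$, hence under $M$, while $W=(U_+\oplus U_-)^{\perp}$ is $A$-invariant and annihilated by the rank-two perturbation, so $M|_{W}=A|_{W}$ has spectrum contained in $\{\theta,\tau\}$. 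In the displayed bases $M|_{U_-}$ has characteristic polynomial $\mu^{2}-(t+a-c)\mu+\bigl(t(a-c+\epsilon)-(k-c)\bigr)$ with $t:=\gamma-\beta$, and when $\beta+\gamma=0$ one has $M|_{U_+}=A|_{U_+}$, whose eigenvalues lie in $\{k,\theta,\tau\}$.

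Strong cospectrality then comes almost for free. Since $\ee_x=\tfrac12(\ee_x+\ee_y)+\tfrac12(\ee_x-\ee_y)$ with the two summands in $U_+$ and $U_-$ respectively (and $\ee_y$ the same up to the sign of the second), for each eigenvalue $\lambda$ of $M$ the spectral projection gives $E_\lambda\ee_x=\tfrac12E_\lambda(\ee_x+\ee_y)$ if $\lambda\in\mathrm{spec}(M|_{U_+})$, $E_\lambda\ee_x=\tfrac12E_\lambda(\ee_x-\ee_y)$ if $\lambda\in\mathrm{spec}(M|_{U_-})$, and $E_\lambda\ee_x=0$ otherwise — provided these two spectra are disjoint. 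Hence $E_\lambda\ee_x=\pm E_\lambda\ee_y$ for every $\lambda$, with a positive sign on the eigenvalues coming from $U_+$ and a negative sign on the two eigenvalues $\mu_1,\mu_2$ coming from $U_-$; here I use that $\ee_x-\ee_y$ is not an eigenvector of $M|_{U_-}$, which holds unless $x$ and $y$ have exactly the same neighbours outside $\{x,y\}$ — in a strongly regular graph, only when $G$ is complete multipartite with $x,y$ in one part. (That case is treated separately at the end.)

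The crux is the arithmetic. By the known spectral characterization of pretty good state transfer, once $x,y$ are strongly cospectral, PGST occurs between them if and only if every integer tuple $(\ell_\lambda)$, indexed by the support of $\ee_x$, satisfying $\sum_\lambda\ell_\lambda\lambda=0$ and $\sum_\lambda\ell_\lambda=0$ has $\ell_{\mu_1}+\ell_{\mu_2}$ even. Take $\beta=-\gamma$, so that the positive-sign eigenvalues in the support all lie in the fixed number field $F:=\mathbb{Q}(\theta,\tau)$, which has at most one quadratic subfield, while $\mu_{1,2}=\tfrac12(t+a-c)\pm\sqrt{\delta(t)}$ with $t=2\gamma$ and $\delta(t)=\tfrac14\bigl(t^{2}-2t(a-c+2\epsilon)+(\theta-\tau)^{2}\bigr)$ a rational number (recall $(\theta-\tau)^{2}=(a-c)^{2}+4(k-c)\in\mathbb{Z}$). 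Away from the degenerate family above, $t\mapsto\delta(t)$ is not the square of a linear polynomial — its discriminant in $t$ is $16\bigl(\epsilon(a-c)+\epsilon^{2}-(k-c)\bigr)$, which vanishes only when $k=c$ (complete multipartite) or $a=k-1$ (disjoint unions of cliques, not strongly regular). Hence one may choose $\gamma$ rational so that $\delta(t)$ is positive and lies in neither the trivial square class nor the square class of $(\theta-\tau)^{2}$; then $\sqrt{\delta(t)}\notin F$. This at once gives the disjointness $\mathrm{spec}(M|_{U_-})\cap\mathrm{spec}(M|_{U_+})=\emptyset$ needed above, and forces the $\sqrt{\delta(t)}$-part of any relation $\sum_\lambda\ell_\lambda\lambda=0$ to be $(\ell_{\mu_1}-\ell_{\mu_2})\sqrt{\delta(t)}$, hence $\ell_{\mu_1}=\ell_{\mu_2}$, hence $\ell_{\mu_1}+\ell_{\mu_2}$ even — precisely the Kronecker condition. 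Therefore $G^{-\gamma,\gamma}$ has pretty good state transfer between $x$ and $y$.

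I expect the real work to lie not in any single step but in coordinating the genericity choices, and in the one degenerate case: when $G$ is complete multipartite and $x,y$ lie in a common part, $U_-$ shrinks to the line $\mathbb{R}(\ee_x-\ee_y)$, on which $M$ acts as multiplication by $t=\gamma-\beta$, and $U_+$ absorbs the indicator of that part; there one instead chooses $\beta,\gamma$ with $t$ irrational and outside $\mathbb{Q}$ (which in this case contains all the other eigenvalues in the support of $\ee_x$, since $\theta=0$ and $\tau=-n$ are integers), so the unique negative-sign eigenvalue $t$ must carry coefficient $0$ in any Kronecker relation, and PGST again follows. It is worth contrasting this with the preceding perfect-state-transfer theorem: there all eigenphases must align exactly at one instant, which forces congruences such as $k\equiv\theta\equiv\tau\pmod 4$; pretty good state transfer needs only approximate alignment, and Kronecker's theorem reduces this to the mild genericity requirement above, which the free parameter $\gamma$ can always satisfy.
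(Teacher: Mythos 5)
This statement is one of the background theorems the paper quotes from Godsil, Guo, Kempton, Lippner, and M\"unch \cite{GGKLM20}; the paper itself contains no proof of it, so there is no internal argument to compare against. Judged on its own, your proposal is essentially a correct reconstruction of the standard argument: the orthogonal $M$-invariant decomposition $U_+\oplus U_-\oplus W$ generated by $\ee_x\pm\ee_y$ (using $A^2=(k-c)I+(a-c)A+cJ$), the $2\times 2$ companion matrix on $U_-$ with characteristic polynomial $\mu^2-(t+a-c)\mu+t(a-c+\epsilon)-(k-c)$, the choice $\beta=-\gamma$ to freeze $M|_{U_+}=A|_{U_+}$, and the square-class argument forcing $\ell_{\mu_1}=\ell_{\mu_2}$ in any integer relation are all sound; I verified the invariance computations, the orthogonality of $U_+$ and $U_-$, the discriminant $16(\epsilon(a-c)+\epsilon^2-(k-c))$, and the cyclicity of $\ee_x-\ee_y$ in $U_-$ (which is what puts both $\mu_i$ in the support with the minus sign).

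Two points deserve tightening. First, the assertion that one may choose $\gamma\in\mathbb{Q}$ with $\delta(t)$ positive and outside both the trivial square class and that of $(\theta-\tau)^2$ is stated without justification; it is true (e.g.\ $4\delta(t)=(t-b)^2+e$ with $e\neq 0$ an integer, and $(t-b)^2+e=s^2$ or $=(\theta-\tau)^2s^2$ has only finitely many, respectively exponentially sparse, integer solutions $t$), but since this genericity is the crux of the whole proof it should be spelled out. Second, your handling of the imprimitive cases is slightly off relative to the paper's definition of strongly regular: a disjoint union of at least two cliques $mK_r$ \emph{does} satisfy the stated definition (with $c=0$), is exactly the adjacent-twin case $a=k-1$ you dismiss, and for $x,y$ in distinct components no choice of $\beta,\gamma$ can give pretty good state transfer, since $U(\tau)_{x,y}=0$ for all $\tau$; so either connectivity must be assumed (as in \cite{GGKLM20}) or that family handled like your complete multipartite case when $x,y$ share a component. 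In the complete multipartite case itself your fix is correct, though you should state explicitly that you keep $\beta=-\gamma$ there so that the $U_+$ eigenvalues stay in $\{k,0,-m\}\subset\mathbb{Q}$ while $t=2\gamma$ is chosen irrational, which forces $\ell_t=0$ in any relation.
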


Eisenberg, Kempton, and Lippner~\cite{EKL18} generalized these results to asymmetric graphs; we will discuss their main result in the next section, before looking at extensions to this result.

Chen, Mereau, and Feder~\cite{CMF16} considered the problem of improving state transfer on paths by adding weighed connections to new sites from the third site from each end.  They find that using a weight proportional to $\sqrt{n}$, pretty good state transfer is achieved between the end points on the time scale proportional to $n^{3/2}$, with a transfer probability $p$ such that $1 - p$ is proportional to $1 / n$.  We analyze the general problem of adding a weighted connection to each end of a path as an application of the main result of Eisenberg, Kempton, and Lippner~\cite{EKL18} and our extensions on the result.

\section{Preliminaries}

We consider pretty good state transfer on a graph $G$, with vertex set $V(G)$ and edge set $E(G)$, under the model described with an XX-type Hamiltonian in the presence of a magnetic field, given by
\[
H = \frac{1}{2} \sum_{(i, j) \in E(G)} J_{ij} (X_i X_j + Y_i Y_j) + \sum_{i \in V(G)} Q_i Z_i,
\]
where $X_i$, $Y_i$, and $Z_i$ are Pauli matrices acting at position $i$, $J_{ij}$ denotes the coupling strength between $i$ and $j$, and $Q_i$ denotes the magnetic field strength at $i$.  From a graph theoretic standpoint, we can model the coupling strength by weighting the edges of the graph, and model the magnetic field by adding weighted loops at the vertices; this information is captured by the (weighted) adjacency matrix $A$ of the graph.  We consider the entries of $A$ to be contained in some subfield $\mathcal{F}$ of the real numbers.  Using the Jordan-Wigner transformation~\cite{JW}, we obtain that when the initial state is restricted to the single excitation space, the evolution of the process is governed by $U(t) = \exp(i t A)$.

Formally, we say a graph $G$ has \emph{perfect state transfer} from $x$ to $y$ if there is some time $\tau > 0$ such that $|U(\tau)_{x, y}| = 1$.  Further, we say a graph $G$ has \emph{pretty good state transfer} from $x$ to $y$ if, for any $\epsilon > 0$, there exists a time $\tau_\epsilon > 0$ such that $|U(\tau_\epsilon)_{x, y}| > 1 - \epsilon$.

If $M$ is a symmetric matrix with $d$ distinct eigenvalues $\{\theta_j\}_{j = 1}^d$, then the spectral decomposition of $M$ is
\[
M = \sum_{j = 1}^d \theta_j E_j,
\]
where $E_j$ denotes the orthogonal projection onto the eigenspace corresponding to $\theta_j$.  When we speak of the spectral decomposition of a graph $G$, we mean the spectral decomposition of its adjacency matrix $A$.  For a vertex $x$, let $\ee_x$ represent the characteristic vector of $x$.

We say that two vertices $x$ and $y$ of a graph $G$ are \emph{cospectral} if for each idempotent $E_j$ in the spectral decomposition of $G$, we have $(E_j)_{x,x}  = (E_j)_{y, y}$.  Equivalently, we can say that $G \setminus x$ and $G \setminus y$ have the same characteristic polynomial.  We say that two vertices are \emph{parallel} if for each eigenvalue $\theta_j$, the vectors $E_j \ee_x$ and $E_j \ee_y$ are parallel.  We say that two vertices are \emph{strongly cospectral} if for each eigenvalue $\theta_j$, we have $E_j \ee_x = \pm E_j \ee_y$.  A pair of vertices is strongly cospectral if and only if the vertices are cospectral and parallel, as demonstrated by Godsil and Smith~\cite{GS17}.  As discussed in~\cite{Godsil2012}, Dave Witte Morris observed that a necessary condition for two vertices to admit pretty good state transfer is that they be strongly cospectral.

Equivalent to the condition on the projections onto the eigenspace, we can also determine strong cospectrality by considering the eigenvectors, as demonstrated below.

\begin{prop} \label{prop-sc-equiv}
For every eigenvalue $\lambda$ of $H$, $E_\lambda \ee_u = \pm E_\lambda \ee_v$ if and only if every eigenvector $x$ of $H$ with eigenvalue $\lambda$ satisfies $x(u) = \pm x(v)$.
\end{prop}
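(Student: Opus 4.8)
The plan is to reduce both sides of the equivalence to a statement about a fixed orthonormal basis of the $\lambda$-eigenspace, and then dispose of the sign ambiguity with a short linear-algebra observation. First I would fix an eigenvalue $\lambda$ of $H$, let $\mathcal{E}_\lambda$ denote its eigenspace, and choose an orthonormal basis $x_1, \dots, x_k$ of $\mathcal{E}_\lambda$, so that $E_\lambda = \sum_{j=1}^k x_j x_j^\top$ and hence $E_\lambda \ee_u = \sum_{j=1}^k x_j(u)\, x_j$ and $E_\lambda \ee_v = \sum_{j=1}^k x_j(v)\, x_j$. Since the $x_j$ are linearly independent, for a fixed $\epsilon \in \{+1,-1\}$ the identity $E_\lambda \ee_u = \epsilon\, E_\lambda \ee_v$ holds if and only if $x_j(u) = \epsilon\, x_j(v)$ for every $j$.

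For the forward direction I would assume $E_\lambda \ee_u = \epsilon\, E_\lambda \ee_v$ for some fixed $\epsilon$, so that $x_j(u) = \epsilon\, x_j(v)$ for all $j$; then any eigenvector with eigenvalue $\lambda$ can be written $x = \sum_j c_j x_j$, and $x(u) = \sum_j c_j x_j(u) = \epsilon \sum_j c_j x_j(v) = \epsilon\, x(v)$, giving $x(u) = \pm x(v)$.

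For the converse, the subtlety is that the hypothesis only tells us that each eigenvector \emph{individually} satisfies $x(u) = x(v)$ or $x(u) = -x(v)$, with a sign that could a priori vary from one eigenvector to another, whereas I need a single sign valid for all of $\mathcal{E}_\lambda$. To get this, introduce $W_+ = \{x \in \mathcal{E}_\lambda : x(u) = x(v)\}$ and $W_- = \{x \in \mathcal{E}_\lambda : x(u) = -x(v)\}$, both of which are linear subspaces of $\mathcal{E}_\lambda$. The eigenvectors of $H$ with eigenvalue $\lambda$ are exactly the nonzero elements of $\mathcal{E}_\lambda$, so the hypothesis gives $W_+ \cup W_- = \mathcal{E}_\lambda$. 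The key step is the standard fact that a real vector space cannot be written as the union of two proper subspaces: if neither of $W_+, W_-$ contained the other, then picking $a \in W_+ \setminus W_-$ and $b \in W_- \setminus W_+$ would force $a+b$ to lie outside both, a contradiction. Hence $W_+ = \mathcal{E}_\lambda$ or $W_- = \mathcal{E}_\lambda$; in either case every $x_j$ satisfies $x_j(u) = \epsilon\, x_j(v)$ for the corresponding fixed $\epsilon$, so $E_\lambda \ee_u = \sum_j x_j(u)\, x_j = \epsilon \sum_j x_j(v)\, x_j = \epsilon\, E_\lambda \ee_v$, as required.

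I expect the only genuine content to be this last observation about covering a space by two subspaces; everything else is bookkeeping with the spectral decomposition. One small point I would state carefully is that ``$E_\lambda \ee_u = \pm E_\lambda \ee_v$'' is to be read as asserting a single sign for the given $\lambda$ (both signs being admissible precisely when the two projections vanish), which is exactly what the argument delivers.
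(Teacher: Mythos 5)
Your proof is correct and follows the same basic strategy as the paper's (expand $E_\lambda$ over an orthonormal basis of the eigenspace and compare coefficients), but your treatment of the converse is genuinely more careful than the paper's. The paper's converse simply writes $E_\lambda \ee_u = \sum_i x_i(\pm x_i(v)) = \pm \sum_i x_i x_i^T \ee_v$, which tacitly assumes that the sign is the same for every basis vector $x_i$ --- exactly the point you flag as the ``only genuine content.'' Your repair is the right one: $W_+$ and $W_-$ are the kernels of the linear functionals $x \mapsto x(u) - x(v)$ and $x \mapsto x(u) + x(v)$ restricted to $\mathcal{E}_\lambda$, hence subspaces, and a vector space is never the union of two proper subspaces (your $a+b$ argument is the standard one and is airtight). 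This yields a uniform sign and closes the gap. For the forward direction the paper avoids choosing a basis altogether, writing $x(u) = \ee_u^\dagger E_\lambda x = (E_\lambda\ee_u)^\dagger x = \pm(E_\lambda\ee_v)^\dagger x = \pm x(v)$ using $E_\lambda x = x$; that is marginally slicker but equivalent to your coefficient computation. In short: same skeleton, but your version proves the proposition under the weaker, per-eigenvector reading of ``$x(u) = \pm x(v)$,'' which is the reading one actually needs when verifying strong cospectrality in examples such as the $K_3$ example later in the paper.
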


\begin{proof}
Suppose $E_\lambda \ee_u = \pm E_\lambda \ee_v$ and let $x$ be an eigenvector of $H$ with eigenvalue $\lambda$.  Then 
\[
x(u) = \ee_u^\dagger x = \ee_u^\dagger E_\lambda x = (E_\lambda \ee_u)^\dagger x = \pm (E_\lambda \ee_v)^\dagger x = \pm \ee_v^\dagger E_\lambda x = \pm \ee_v^\dagger x = \pm x(v)
\]
as desired.  Conversely, if every eigenvector $x$ of $H$ with eigenvalue $\lambda$ satisfies $x(u) = \pm x(v)$, then we can calculate
\[
E_\lambda = \sum_i x_i x_i^T,
\]
where $\{x_i\}$ is an orthonormal basis of eigenvectors with eigenvalue $\lambda$, and then
\[
E_\lambda \ee_u = \sum_i x_i x_i^T \ee_u = \sum_i x_i x_i(u) = \sum_i x_i (\pm x_i(v)) = \pm \sum_i x_i x_i^T \ee_v = \pm E_\lambda \ee_v
\]
as desired.
\end{proof}

Let $P_+$ be the minimal polynomial of $A$ relative to $\ee_x + \ee_y$ and let $P_-$ be the minimal polynomial of $A$ relative to $\ee_x - \ee_y$, where the \emph{minimal polynomial of $A$ relative to $z$} is the smallest degree monic polynomial $\rho$ such that $\rho(A) z = 0$. Let $\tr(\rho)$ denotes the trace (i.e.\ the sum of roots) of a polynomial $\rho$.  Eisenberg, Kempton, and Lippner~\cite{EKL18} provided the following factorization of the characteristic polynomial of a graph with cospectral vertices using these polynomials.

\begin{lemma} \cite{EKL18} \label{charpolyfactor}
Let $G$ be a (weighted) graph and let $x, y$ be cospectral vertices of $G$.  The characteristic polynomial $\phi$ of $G$ decomposes as 
\[
\phi = P_+ \cdot P_- \cdot P_0,
\]
where $P_+$ and $P_-$ have no multiple roots, and there is an orthonormal basis of eigenvectors of $G$ such that:
\begin{enumerate}
\item for each root $\lambda$ of $P_+$ the basis contains a unique eigenvector $\varphi$ with eigenvalue $\lambda$ and $\varphi(x) = \varphi(y) \neq 0$,
\item for each root $\lambda$ of $P_-$ the basis contains a unique eigenvector $\varphi$ with eigenvalue $\lambda$ and $\varphi(x) = - \varphi(y) \neq 0$,
\item for each root $\lambda$ of $P_0$ with multiplicity $k$ the basis contains exactly $k$ eigenvectors with eigenvalue $\lambda$ all of which vanish on both $x$ and $y$.
\end{enumerate}
\end{lemma}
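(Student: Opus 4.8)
The plan is to prove the factorization by analyzing how the adjacency matrix $A$ acts on the decomposition of $\mathbb{R}^{V(G)}$ into eigenspaces, using the cospectrality hypothesis to control the behaviour at $x$ and $y$. First I would recall that since $x$ and $y$ are cospectral, for each eigenvalue $\theta_j$ we have $(E_j)_{x,x} = (E_j)_{y,y}$, and I would also use the fact (from Godsil and Smith, cited in the excerpt) that strong cospectrality is \emph{not} assumed here, only cospectrality; so the first substantive step is to separate the eigenvalues $\theta_j$ into three classes according to the behaviour of $E_j \ee_x$ and $E_j \ee_y$. The key observation is that within a single eigenspace $\mathrm{col}(E_j)$, the two vectors $E_j \ee_x$ and $E_j \ee_y$ have equal norm (by cospectrality, since $\|E_j \ee_x\|^2 = (E_j)_{x,x} = (E_j)_{y,y} = \|E_j \ee_y\|^2$), so either both vanish, or both are nonzero.

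Next I would handle each class. For eigenvalues where $E_j \ee_x = E_j \ee_y = 0$: these contribute roots of $P_0$, and I can pick any orthonormal basis of that eigenspace; every such basis vector $\varphi$ satisfies $\varphi(x) = \ee_x^\dagger \varphi = \ee_x^\dagger E_j \varphi = (E_j\ee_x)^\dagger \varphi = 0$ and likewise $\varphi(y)=0$, giving property 3. For eigenvalues where $E_j \ee_x, E_j \ee_y \neq 0$: I would split further depending on whether $E_j \ee_x + E_j \ee_y = 0$, or $E_j \ee_x - E_j \ee_y = 0$, or neither. Here I need to argue that the ``mixed'' case (where $E_j\ee_x \pm E_j\ee_y$ are both nonzero and not parallel) would actually force a higher-dimensional eigenspace in a way compatible with the claim, OR — and this is the cleaner route — I should instead use $P_+$ and $P_-$ being defined as minimal polynomials relative to $\ee_x+\ee_y$ and $\ee_x - \ee_y$ and argue directly: $\theta_j$ is a root of $P_+$ exactly when $E_j(\ee_x+\ee_y) \neq 0$, i.e.\ when the vector $\ee_x + \ee_y$ has nonzero projection onto that eigenspace, and this projection is one-dimensional along $E_j\ee_x + E_j\ee_y$. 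Taking $\varphi_j = (E_j \ee_x + E_j \ee_y)/\|E_j\ee_x + E_j\ee_y\|$ gives the distinguished eigenvector for property 1 (and analogously for $P_-$), and I would complete the orthonormal basis of that eigenspace with vectors orthogonal to $E_j \ee_x + E_j \ee_y$; I then need to check those complementary vectors actually vanish at $x$ and $y$.

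The main obstacle — and the crux of the argument — is showing that for each eigenvalue $\theta_j$ with $E_j \ee_x \neq 0$, the space $\mathrm{col}(E_j)$ contains \emph{at most one} of a ``plus-type'' and a ``minus-type'' direction, and moreover that any vector in $\mathrm{col}(E_j)$ orthogonal to both $E_j\ee_x$ and $E_j\ee_y$ vanishes at $x$ and $y$ — equivalently, that the span of $\{E_j\ee_x, E_j\ee_y\}$ is exactly the ``visible'' part of the eigenspace. The vanishing is immediate from $\varphi \perp E_j\ee_x \Rightarrow \varphi(x) = (E_j\ee_x)^\dagger\varphi = 0$. The subtler point is that $P_+$ and $P_-$ have no repeated roots: this is because the minimal polynomial relative to a single vector $z$ is $\prod_{j: E_j z \neq 0}(t - \theta_j)$, which is automatically squarefree — each eigenvalue appears at most once regardless of eigenspace multiplicity. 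The only remaining thing to verify is that $P_+$ and $P_-$ share no common root, i.e.\ that $E_j\ee_x + E_j\ee_y$ and $E_j\ee_x - E_j\ee_y$ are not both nonzero for the \emph{same} $j$ when the eigenspace is one-dimensional; but when the eigenspace has dimension $\geq 2$ this \emph{can} happen, so actually the correct reading is that $P_0$ absorbs the extra multiplicity and one must be slightly careful to define $P_+ \cdot P_- \cdot P_0$ so that a shared root $\theta_j$ contributes factor $(t-\theta_j)$ to both $P_+$ and $P_-$ and the leftover multiplicity $\mathrm{mult}(\theta_j) - 2$ to $P_0$. I would flag this bookkeeping as the one place where the statement's phrasing needs to be matched carefully against the definitions, and I would verify it by a dimension count: $\deg P_+ + \deg P_- + \deg P_0 = \sum_j \mathrm{mult}(\theta_j) = |V(G)| = \deg \phi$, using that the distinguished eigenvectors together with the vanishing complementary vectors form a full orthonormal eigenbasis.
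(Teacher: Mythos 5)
Your overall architecture is sound: you correctly identify the easy ingredients (the minimal polynomial of $A$ relative to a vector $z$ is $\prod_{j:E_jz\ne 0}(t-\theta_j)$, hence squarefree; any eigenvector orthogonal to both $E_j\ee_x$ and $E_j\ee_y$ vanishes at $x$ and $y$), and you correctly locate the crux in how the ``plus'' and ``minus'' directions coexist inside a single eigenspace. (For the record, the paper quotes this lemma from Eisenberg--Kempton--Lippner without proof, so the comparison here is against a complete argument rather than against the paper's text.)

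The gap is that you flag the crux but never discharge it. Writing $u_\pm = E_j(\ee_x\pm\ee_y)$, everything you leave as ``bookkeeping to be matched carefully'' follows from one computation you never make: by symmetry of $E_j$ and cospectrality,
\[
\langle u_+,\,u_-\rangle \;=\; (\ee_x+\ee_y)^T E_j (\ee_x-\ee_y) \;=\; (E_j)_{xx}-(E_j)_{yy} \;=\; 0 ,
\]
so $u_+$ and $u_-$ are always orthogonal. This simultaneously (i) shows that a common root of $P_+$ and $P_-$ forces $\dim\mathrm{col}(E_j)\ge 2$, so $P_+P_-$ divides $\phi$ and $P_0$ is a genuine polynomial with nonnegative multiplicities --- the scenario you worried about, a shared root on a one-dimensional eigenspace, cannot occur; (ii) lets you place $\varphi_+=u_+/\|u_+\|$ and $\varphi_-=u_-/\|u_-\|$ together in one orthonormal basis of the eigenspace and complete it with vectors orthogonal to $\mathrm{span}\{E_j\ee_x,E_j\ee_y\}$ --- note that completing with vectors orthogonal only to $u_+$, as you first propose, is not enough, since that yields only $\psi(x)+\psi(y)=0$ rather than $\psi(x)=\psi(y)=0$; and (iii) makes your closing dimension count valid, since the distinguished vectors are now genuinely distinct orthonormal basis elements. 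You should also record the short verification that $\varphi_+(x)=\varphi_+(y)\ne 0$: equality is cospectrality plus symmetry of $E_j$, and nonvanishing follows from $\varphi_+(x)+\varphi_+(y)=\|u_+\|>0$ (similarly $\varphi_-(x)=-\varphi_-(y)=\|u_-\|/2\neq 0$). With these additions your outline becomes a complete proof.
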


When the vertices $x$ and $y$ are strongly cospectral, we observe that $P_+$ and $P_-$ have no common roots.

The following lemma provides a characterization of pretty good state transfer.  Its form is due to Kempton, Lippner, and Yau~\cite{KLY17b}, which was derived from results of Banchi, Coutinho, Godsil, and Severini~\cite{BCGS16}.

\begin{lemma}\cite{KLY17b, BCGS16} \label{lem:eig}
Let $G$ be a graph and let $x, y$ be vertices of $G$.  Then pretty good state transfer from $x$ to $y$ occurs if and only if the following two conditions are satisfied:
\begin{enumerate}
\item The vertices $x$ and $y$ are strongly cospectral.
\item Let $\{\lambda_i\}$ be the roots of $P_+$ and $\{\mu_j\}$ the roots of $P_-$.  Then for any choice of integers $\ell_i, m_j$ such that
\begin{align*}
\sum_i \ell_i \lambda_i + \sum_j m_j \mu_j &= 0 \\
\sum_i \ell_i + \sum_j m_j &= 0,
\end{align*}
we have $\sum_j m_j$ is even.
\end{enumerate}
\end{lemma}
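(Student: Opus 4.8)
The plan is to make the evolution explicit in the eigenbasis supplied by Lemma~\ref{charpolyfactor}, reduce pretty good state transfer to a statement about the closure of the line $\tau\mapsto(\tau\lambda_i,\tau\mu_j)$ in a torus, and then read off that closure from its linear characters (equivalently, apply the Kronecker approximation theorem). First I would dispose of condition~1 as a necessary condition: it is standard (see~\cite{BCGS16}) that $|U(\tau)_{x,y}|\to 1$ forces $x$ and $y$ to be strongly cospectral, since $U(\tau)\ee_y$ is always a unit vector, so $|\langle\ee_x,U(\tau)\ee_y\rangle|\to1$ means $U(\tau)\ee_y$ approaches the line through $\ee_x$, and applying each spectral idempotent gives $e^{i\tau\lambda}E_\lambda\ee_y$ approaching a common multiple of $E_\lambda\ee_x$. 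So it suffices to prove that, assuming $x,y$ strongly cospectral, pretty good state transfer is equivalent to condition~2.

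With strong cospectrality in hand, Lemma~\ref{charpolyfactor} applies and $P_+,P_-$ are coprime; list the roots of $P_+$ as $\lambda_1,\dots,\lambda_n$ with orthonormal eigenvectors $\varphi_i$ (so $\varphi_i(x)=\varphi_i(y)$) and the roots of $P_-$ as $\mu_1,\dots,\mu_m$ with eigenvectors $\psi_j$ (so $\psi_j(x)=-\psi_j(y)$), and set $a_i=\varphi_i(x)^2>0$, $b_j=\psi_j(x)^2>0$. Since the $P_0$-eigenvectors vanish at $x$,
\[
U(\tau)_{x,y}=\sum_i a_i e^{i\tau\lambda_i}-\sum_j b_j e^{i\tau\mu_j},\qquad \sum_i a_i+\sum_j b_j = I_{x,x}=1 .
\]
Hence $|U(\tau)_{x,y}|\le 1$ for all $\tau$, and because the weights $a_i,b_j$ are fixed and strictly positive, a continuity/compactness argument shows that pretty good state transfer holds if and only if for every $\epsilon>0$ there are a real $\tau$ and a unit complex number $\gamma$ with $|e^{i\tau\lambda_i}-\gamma|<\epsilon$ for all $i$ and $|e^{i\tau\mu_j}+\gamma|<\epsilon$ for all $j$; that is, the $2\pi$-phases $\tau\lambda_i$ can be driven arbitrarily close to a common value $\theta$ and the $\tau\mu_j$ close to the antipodal value $\theta+\pi$.

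Now I would phrase this on the torus $\mathbb T=(\mathbb R/2\pi\mathbb Z)^{n+m}$. The curve $\tau\mapsto(\tau\lambda_1,\dots,\tau\lambda_n,\tau\mu_1,\dots,\tau\mu_m)$ has closure a closed subgroup $T$ whose annihilator in the character group $\mathbb Z^{n+m}$ is exactly $\{(\ell_i,m_j):\sum_i\ell_i\lambda_i+\sum_j m_j\mu_j=0\}$. The admissible target configurations form the coset $p+D$, where $D=\{(\theta,\dots,\theta)\}$ is the diagonal subtorus and $p=(0,\dots,0,\pi,\dots,\pi)$ ($n$ zeros followed by $m$ copies of $\pi$); the condition just derived is precisely $T\cap(p+D)\neq\varnothing$, i.e.\ $p\in T+D$. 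Since $T$ and $D$ are compact, $T+D$ is a closed subgroup, so by Pontryagin duality $p\in T+D$ if and only if every character annihilating $T+D$ also annihilates $p$. Those characters are exactly the integer vectors $(\ell_i,m_j)$ with $\sum_i\ell_i\lambda_i+\sum_j m_j\mu_j=0$ and $\sum_i\ell_i+\sum_j m_j=0$, and such a character evaluated at $p$ equals $(-1)^{\sum_j m_j}$. Thus the reaching condition is exactly that $\sum_j m_j$ be even for every such $(\ell_i,m_j)$, which is condition~2.

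I expect the main obstacle to be the careful handoff between the analytic statement ``$|U(\tau)_{x,y}|$ can be made $>1-\epsilon$'' and the torus statement ``the line comes within $\epsilon$ of $p+D$'': one must verify that near-maximality of the modulus of $\sum_i a_i e^{i\tau\lambda_i}-\sum_j b_j e^{i\tau\mu_j}$ really is equivalent to near-alignment of all the $\tau\lambda_i$ to one value and all the $\tau\mu_j$ to its antipode — this is where fixedness and strict positivity of the weights is used — and that passing to the closure $T$ preserves the ``for every $\epsilon$'' quantifier. The remaining pieces, namely the identifications $T^\perp=\{(\ell,m):\sum\ell_i\lambda_i+\sum m_j\mu_j=0\}$, $D^\perp=\{(\ell,m):\sum\ell_i+\sum m_j=0\}$, $(T+D)^\perp=T^\perp\cap D^\perp$, and $\chi_{(\ell,m)}(p)=(-1)^{\sum_j m_j}$, are routine character computations, and it is precisely the last of these that manufactures the parity condition; presenting the whole reduction as an instance of Kronecker's theorem is the cleanest exposition.
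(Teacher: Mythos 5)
The paper does not prove Lemma~\ref{lem:eig}; it imports it from \cite{KLY17b} and \cite{BCGS16}. Your argument is correct and is essentially the standard proof given in those references: express $U(\tau)_{x,y}=\sum_i a_i e^{i\tau\lambda_i}-\sum_j b_j e^{i\tau\mu_j}$ with positive weights summing to one, reduce near-unimodularity to simultaneous phase alignment, and apply Kronecker's approximation theorem (equivalently, the duality computation $(T+D)^\perp=T^\perp\cap D^\perp$ with the character value $(-1)^{\sum_j m_j}$ at the target point). Nothing to flag.
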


Eisenberg, Kempton, and Lippner~\cite{EKL18} obtain the following sufficient condition for pretty good state transfer by considering the minimal polynomials $P_+$ and $P_-$.

\begin{theorem} \label{pgst-asym}
Let $G$ be a graph, let $x$ and $y$ be strongly cospectral vertices of $G$, and assume that $P_+$ and $P_-$ are irreducible polynomials over $\mathcal{F}$.  Then if
\[
\frac{\tr(P_+)}{\deg(P_+)} \neq \frac{\tr(P_-)}{\deg(P_-)},
\]
then there is pretty good state transfer from $x$ to $y$.
\end{theorem}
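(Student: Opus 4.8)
The plan is to verify the two conditions in the characterization of pretty good state transfer given in Lemma~\ref{lem:eig}. Condition~1, that $x$ and $y$ be strongly cospectral, is part of the hypothesis, so the entire task is to establish condition~2: if $\ell_i, m_j$ are integers with $\sum_i \ell_i \lambda_i + \sum_j m_j \mu_j = 0$ and $\sum_i \ell_i + \sum_j m_j = 0$, where $\{\lambda_i\}$ are the roots of $P_+$ and $\{\mu_j\}$ the roots of $P_-$, then $\sum_j m_j$ is even. We will in fact show the stronger statement $\sum_j m_j = 0$.

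First I would pass to a Galois setting. Since $\mathcal{F} \subseteq \mathbb{R}$ has characteristic zero, the irreducible polynomials $P_+$ and $P_-$ are separable; let $K \supseteq \mathcal{F}$ be a splitting field of $P_+ P_-$ and $\Gamma = \mathrm{Gal}(K/\mathcal{F})$. Every $\sigma \in \Gamma$ permutes the roots of $P_+$ among themselves and the roots of $P_-$ among themselves. Applying each $\sigma \in \Gamma$ to the linear relation $\sum_i \ell_i \lambda_i + \sum_j m_j \mu_j = 0$ and summing over $\Gamma$ yields
\[
\sum_i \ell_i \Bigl( \sum_{\sigma \in \Gamma} \sigma(\lambda_i) \Bigr) + \sum_j m_j \Bigl( \sum_{\sigma \in \Gamma} \sigma(\mu_j) \Bigr) = 0 .
\]
Irreducibility is exactly what makes this useful: $\Gamma$ acts transitively on the $\deg(P_+)$ roots of $P_+$, so for each fixed $i$ the map $\sigma \mapsto \sigma(\lambda_i)$ is a $\bigl(|\Gamma|/\deg(P_+)\bigr)$-to-one surjection onto those roots, whence $\sum_{\sigma \in \Gamma} \sigma(\lambda_i) = \tfrac{|\Gamma|}{\deg(P_+)} \tr(P_+)$, a quantity independent of $i$; likewise $\sum_{\sigma \in \Gamma} \sigma(\mu_j) = \tfrac{|\Gamma|}{\deg(P_-)} \tr(P_-)$ for every $j$.

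Substituting these into the displayed identity, cancelling $|\Gamma|$, and abbreviating $L = \sum_i \ell_i$ and $M = \sum_j m_j$, we obtain $\tfrac{\tr(P_+)}{\deg(P_+)} L + \tfrac{\tr(P_-)}{\deg(P_-)} M = 0$. The second hypothesis on the integers gives $L = -M$, so $M\bigl( \tfrac{\tr(P_-)}{\deg(P_-)} - \tfrac{\tr(P_+)}{\deg(P_+)} \bigr) = 0$, and since the two ratios are assumed unequal we conclude $M = 0$. In particular $\sum_j m_j$ is even, so both conditions of Lemma~\ref{lem:eig} are satisfied and there is pretty good state transfer from $x$ to $y$. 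The one step that needs genuine care is the Galois-averaging identity — recognizing that symmetrizing the relation over $\Gamma$ collapses it, via transitivity, to a single scalar equation in $L$ and $M$; everything after that is immediate. (Note that strong cospectrality also guarantees that $P_+$ and $P_-$ share no root, but the argument above does not require this.)
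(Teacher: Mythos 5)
Your proof is correct. The paper does not actually prove Theorem~\ref{pgst-asym} itself---it quotes the result from~\cite{EKL18}---and your argument (symmetrizing the integer relation over $\mathrm{Gal}(K/\mathcal{F})$ and using transitivity of the Galois action on the roots of each irreducible polynomial to collapse the relation to $\frac{\tr(P_+)}{\deg(P_+)}L + \frac{\tr(P_-)}{\deg(P_-)}M = 0$, whence $M=0$) is essentially the standard proof from that reference, verifying condition~2 of Lemma~\ref{lem:eig}.
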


The results of Kempton, Lippner, and Yau~\cite{KLY17b} and Eisenberg, Kempton, and Lippner~\cite{EKL18} focus on determining choices of potentials to ensure general classes of graphs can admit pretty good state transfer.  On the other hand, we observe instances of unweighted graphs that satisfy the conditions of Theorem~\ref{pgst-asym}, and provide examples in Appendix~\ref{app:unweighted}.  Unfortunately, we know of no general method of producing graphs with irreducible minimal polynomials.

\section{Involutions}

We have seen in Lemma~\ref{lem:eig} that pretty good state transfer in a graph is characterized by the strong cospectrality of the vertices and a restriction on the (rational) dependence of the eigenvalues in the support.  Kempton, Lippner, and Yau~\cite{KLY17b} claimed that a graph with an involution always has the strong cospectrality condition.  We will show that this is not entirely correct, but note that we can use the characteristic polynomial factorization provided by Lemma~\ref{charpolyfactor} to determine strong cospectrality.

Let $G$ be a graph with weighted adjacency matrix $A$ and non-trivial involution $\sigma$, such that the weights (including loops) are symmetric with respect to the involution.  Let $S = \{x \in V(G) : \sigma x = x\}$ be the set of vertices fixed by $\sigma$ and let $G'$ be induced by a set of vertices of $G \setminus S$ such that precisely one of $x$ and $\sigma x$ is in $V(G')$ for all $x \in V(G \setminus S)$.  Observe that $G - G' - S \equiv G'$.  We can express the (weighted) adjacency matrix $A$ of $G$ as
\[
A = \begin{bmatrix} A' & A_\sigma & A_\delta \\ A_\sigma & A' & A_\delta \\ A_\delta^T & A_\delta^T & A_S \end{bmatrix},
\]
where $A'$ denotes the (weighted) adjacency matrix of $G'$, $A_\sigma$ the portion of $A$ representing edges between $G'$ and $G - G' - S$, $A_\delta$ the portion of $A$ representing edges between $G'$ and $S$, and $A_S$ the (weighted) adjacency matrix of $G[S]$.

We present the following lemma due to Kempton, Lippner, and Yau~\cite{KLY17b} which describes how the characteristic polynomial factors.

\begin{lemma} \cite{KLY17b} \label{invol_factor}
Let $G$ be a (weighted) graph with an involution $\sigma$, which respects loops and edge weights.  Then the characteristic polynomial of the (weighted) adjacency matrix $A$ of $G$ factors into two factors $\Pi_+$ and $\Pi_-$ which are, repsectively, the characteristic polynomials of $A_+ := \begin{bmatrix} A' + A_\sigma & A_\delta \\ 2 A_\delta^T & A_S \end{bmatrix}$ and $A_- := A' - A_\sigma$.

Furthermore, there is an eigenbasis for $A$ consisting of vectors that take the form $\begin{bmatrix} a & a & b \end{bmatrix}^T$ and $\begin{bmatrix} c & -c & 0 \end{bmatrix}^T$, where $\begin{bmatrix} a & b \end{bmatrix}^T$ is an eigenvector for $A_+$, and $c$ an eigenvector for $A_-$.
\end{lemma}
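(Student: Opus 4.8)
The plan is to produce an explicit invertible matrix that conjugates $A$ into a block-diagonal form whose two diagonal blocks are $A_+$ and $A_-$. Index the rows and columns of $A$ so that the first block corresponds to $V(G')$, the second to its image under $\sigma$, and the third to $S$; the hypothesis that $\sigma$ respects loops and weights, together with $G - G' - S \equiv G'$, is exactly what produces the displayed block form of $A$, with $A_{11} = A_{22} = A'$, $A_{12} = A_{21} = A_\sigma$, and $A_{13} = A_{23} = A_\delta$. Introduce
\[
T = \begin{bmatrix} I & I & 0 \\ I & -I & 0 \\ 0 & 0 & I \end{bmatrix}, \qquad T^{-1} = \begin{bmatrix} \tfrac{1}{2} I & \tfrac{1}{2} I & 0 \\ \tfrac{1}{2} I & -\tfrac{1}{2} I & 0 \\ 0 & 0 & I \end{bmatrix},
\]
where the identity blocks have the appropriate sizes.

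First I would compute $T^{-1} A T$ block by block. The point is that the symmetry of $A$ forces the $(1,2)$, $(2,1)$, $(2,3)$, and $(3,2)$ blocks of $T^{-1}AT$ to cancel, leaving
\[
T^{-1} A T = \begin{bmatrix} A' + A_\sigma & 0 & A_\delta \\ 0 & A' - A_\sigma & 0 \\ 2 A_\delta^T & 0 & A_S \end{bmatrix}.
\]
Permuting the middle block of rows and columns to the end exhibits this matrix as block-diagonal with diagonal blocks $A_+ = \begin{bmatrix} A' + A_\sigma & A_\delta \\ 2 A_\delta^T & A_S \end{bmatrix}$ and $A_- = A' - A_\sigma$. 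Since conjugation and simultaneous row/column permutation preserve the characteristic polynomial, and the characteristic polynomial of a block-diagonal matrix is the product of those of its blocks, we conclude $\phi_A = \Pi_+ \cdot \Pi_-$.

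For the eigenbasis statement, I would start from any eigenbasis of $T^{-1} A T$ adapted to this block decomposition: vectors of the form $\begin{bmatrix} a & 0 & b \end{bmatrix}^T$ with $\begin{bmatrix} a & b \end{bmatrix}^T$ an eigenvector of $A_+$, and vectors $\begin{bmatrix} 0 & c & 0 \end{bmatrix}^T$ with $c$ an eigenvector of $A_-$. Applying $T$ sends the former to $\begin{bmatrix} a & a & b \end{bmatrix}^T$ and the latter to $\begin{bmatrix} c & -c & 0 \end{bmatrix}^T$, and since $T$ is invertible the images remain a basis, hence an eigenbasis for $A$ of the claimed form.

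There is no substantive obstacle; the only work is the block arithmetic of the second paragraph, which uses nothing beyond the symmetry of $A$ under $\sigma$. The one subtlety worth flagging is that $T$ is not orthogonal, so $T^{-1} A T$ need not be symmetric, and this is the source of the asymmetric factor of $2$ on $A_\delta^T$ in $A_+$; if a symmetric realization of $A_+$ were desired one would additionally rescale the $S$-block of $T$, but the non-orthogonal $T$ above is all the characteristic-polynomial and eigenbasis statements require. It is also worth recording the degenerate case $S = \varnothing$, where the $A_\delta$ and $A_S$ blocks disappear and $A_+ = A' + A_\sigma$, recovering the familiar equitable-partition factorization.
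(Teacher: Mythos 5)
The paper does not prove this lemma; it is imported verbatim from Kempton, Lippner, and Yau \cite{KLY17b}, so there is no in-paper argument to compare against. Your conjugation proof is correct and is essentially the standard argument (and the one in the cited source): the block arithmetic for $T^{-1}AT$ checks out, the permutation similarity exhibits the block-diagonal form with blocks $A_+$ and $A_-$, and pushing an adapted eigenbasis forward through $T$ gives vectors of the stated symmetric and antisymmetric shapes. The one step you leave implicit is that $A_+$, being non-symmetric, actually admits a full eigenbasis; this is needed before you can ``start from any eigenbasis of $T^{-1}AT$ adapted to this block decomposition.'' It follows either from the rescaling you already mention (conjugating by $\mathrm{diag}(I,\tfrac{1}{\sqrt{2}}I)$ makes $A_+$ symmetric, hence diagonalizable with real eigenvalues) or from the general fact that the restriction of the diagonalizable operator $T^{-1}AT$ to an invariant subspace is diagonalizable; one sentence to that effect would close the argument.
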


We note that $P_+$ is a factor of $\Pi_+$ and $P_-$ is a factor of $\Pi_-$.  By the following example, we demonstrate that the presence of an involution is not sufficient to guarantee strong cospectrality.

\begin{example}
Let $G = K_3$ and $\sigma$ be the involution swapping two of the vertices and leaving the third fixed.  Then we have
\[
A = \begin{bmatrix} 0 & 1 & 1 \\ 1 & 0 & 1 \\ 1 & 1 & 0 \end{bmatrix}, \qquad A_+ = \begin{bmatrix} 1 & 1 \\ 2 & 0 \end{bmatrix}, \quad A_- = \begin{bmatrix} -1 \end{bmatrix},
\]
so $\Pi_+(x) = x^2 - x + 2$ and $\Pi_-(x) = x + 1$.  Hence,
\[ \left\{ \begin{bmatrix} 1 & 1 & 1 \end{bmatrix}^T, \begin{bmatrix}1 & 1 & -2\end{bmatrix}^T, \begin{bmatrix} 1 & -1 & 0 \end{bmatrix}^T \right\}, \]
is an eigenbasis for $A$ where $\left\{ \begin{bmatrix} 1 & 1 \end{bmatrix}^T, \begin{bmatrix} 1 & -2 \end{bmatrix}^T \right\}$ is an eigenbasis for $A_+$ and $\left\{\begin{bmatrix} 1 \end{bmatrix}\right\}$ is an eigenbasis for $A_-$.  However, the two vertices swapped by the involution are not strongly cospectral; we observe that $\begin{bmatrix} 2 & 0 & -2 \end{bmatrix}^T$ is also an eigenvector of $A$ but does not satisfy the strong cospectrality condition. \hfill $\Diamond$
\end{example}

The vertices swapped by the involution in the above example fail to be strongly cospectral because $\Pi_+$ and $\Pi_-$ have a common factor leading to a repeated eigenvalue.  In the next example, we show that $\Pi_+$ and $\Pi_-$ depend on the choice of involution, but $P_+$ and $P_-$ do not.

\begin{example}
Let $G = K_{2, 2}$ and $\sigma$ be the involution swapping the vertices in each part.  Then we have
\[
A = \begin{bmatrix} 0 & 1 & 0 & 1 \\ 1 & 0 & 1 & 0 \\ 0 & 1 & 0 & 1 \\ 1 & 0 & 1 & 0 \end{bmatrix}, \qquad A_+ = \begin{bmatrix} 0 & 2 \\ 2 & 0 \end{bmatrix}, \quad A_- = \begin{bmatrix} 0 & 0 \\ 0 & 0 \end{bmatrix},
\]
so $\Pi_+(x) = x^2 - 4$ and $\Pi_-(x) = x^2$.  On the other hand, letting $\sigma'$ be the involution that swapping the vertices in one part and leaving the other fixed, we have
\[
A = \begin{bmatrix} 0 & 0 & 1 & 1 \\ 0 & 0 & 1 & 1 \\ 1 & 1 & 0 & 0 \\ 1 & 1 & 0 & 0 \end{bmatrix}, \qquad A_+ = \begin{bmatrix} 0 & 1 & 1 \\ 2 & 0 & 0 \\ 2 & 0 & 0 \end{bmatrix}, \quad A_- = \begin{bmatrix} 0 \end{bmatrix},
\]
so $\Pi_+(x) = x^3 - 4x$ and $\Pi_-(x) = x$.  

On the other hand, the spectral decomposition of $A$ is given by
\begingroup
\renewcommand*{\arraystretch}{1.5}
\[
A = \begin{bmatrix} 0 & 0 & 1 & 1 \\ 0 & 0 & 1 & 1 \\ 1 & 1 & 0 & 0 \\ 1 & 1 & 0 & 0 \end{bmatrix} = 2 \begin{bmatrix} \frac{1}{4} & \frac{1}{4} & \frac{1}{4} & \frac{1}{4} \\ \frac{1}{4} & \frac{1}{4} & \frac{1}{4} & \frac{1}{4} \\ \frac{1}{4} & \frac{1}{4} & \frac{1}{4} & \frac{1}{4} \\ \frac{1}{4} & \frac{1}{4} & \frac{1}{4} & \frac{1}{4} \end{bmatrix}  - 2 \begin{bmatrix} \frac{1}{4} & \frac{1}{4} & -\frac{1}{4} & -\frac{1}{4} \\ \frac{1}{4} & \frac{1}{4} & -\frac{1}{4} & -\frac{1}{4} \\ -\frac{1}{4} & -\frac{1}{4} & \frac{1}{4} & \frac{1}{4} \\ -\frac{1}{4} & -\frac{1}{4} & \frac{1}{4} & \frac{1}{4} \end{bmatrix} + 0 \begin{bmatrix} \frac{1}{2} & - \frac{1}{2} & 0 & 0 \\ - \frac{1}{2} & \frac{1}{2} & 0 & 0 \\ 0 & 0 & \frac{1}{2} & - \frac{1}{2} \\ 0 & 0 & - \frac{1}{2} & \frac{1}{2} \end{bmatrix},
\]
\endgroup

so we obtain $P_+(x) = x^2 - 4$ and $P_-(x) = x$. \hfill $\Diamond$
\end{example}

Making use of the involution directly, Kempton, Lippner, and Yau~\cite{KLY17b} provide a condition on the characteristic polynomial factors that allows pretty good state transfer to be achieved with potentials on only those two vertices.

\begin{cor} \cite{KLY17b} \label{loop-factor}
Let $G$ be a graph with an involution $\sigma$, a vertex $x$ that is not fixed by $\sigma$, and suppose the only vertices with a potential $Q$ are $x$ and $\sigma x$, with $Q$ transcendental.  Let
\[
\Pi_+(x) = p_1(x) - Q q_1(x), \quad \Pi_-(x) = p_2(x) - Q q_2(x),
\]
where $p_1, q_1, p_2, q_2$ are rational polynomials.  Suppose that $\gcd(p_1, q_1) = \gcd(p_2, q_2) = 1$.  If $\sigma$ fixes at least one vertex or at least one edge, then there is pretty good state transfer from $x$ to $\sigma x$.
\end{cor}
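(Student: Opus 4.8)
The plan is to derive the corollary from Theorem~\ref{pgst-asym}, applied over $\mathcal{F}=\mathbb{Q}(Q)$ to the graph $\hat{G}$ obtained from $G$ by placing the potential $Q$ on $x$ and $\sigma x$; here the weights of $G$ are taken rational (consistent with the hypothesis that $p_1,q_1,p_2,q_2$ are rational), so the adjacency matrix $\hat{A}$ of $\hat{G}$ has all entries in $\mathbb{Q}(Q)$. Since $\sigma$ interchanges $x$ and $\sigma x$, which now carry equal loops, it remains a weight-respecting involution of $\hat{G}$, so Lemma~\ref{invol_factor} factors the characteristic polynomial of $\hat{G}$ as $\Pi_+\cdot\Pi_-$, where $\Pi_\pm$ is the characteristic polynomial of $\hat{A}_\pm$. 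Taking $x\in V(G')$, the vectors $\ee_x+\ee_{\sigma x}$ and $\ee_x-\ee_{\sigma x}$ have, respectively, the forms $\pmat{a&a&b}^T$ and $\pmat{c&-c&0}^T$ of Lemma~\ref{invol_factor}, so they lie in the two $\hat{A}$-invariant subspaces on which $\hat{A}$ is conjugate to $\hat{A}_+$ and $\hat{A}_-$; hence $P_+\mid\Pi_+$ and $P_-\mid\Pi_-$. Writing $\hat{A}_+=A_++Q\,\ee_x\ee_x^T$ (with $\ee_x$ supported on the $G'$-block) and expanding $\det(tI-\hat{A}_+)$ in the rank-one perturbation gives $\Pi_+=p_1-Qq_1$, where $p_1=\det(tI-A_+)$ and $q_1$ is the characteristic polynomial of $A_+$ with row and column $x$ deleted; thus $p_1$ is monic of degree $\deg\Pi_+$ and $q_1$ is monic of degree $\deg\Pi_+-1$, and similarly for $\Pi_-=p_2-Qq_2$.

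The crucial step is to show that $\Pi_+$ and $\Pi_-$ are irreducible over $\mathbb{Q}(Q)$. Viewing $\Pi_+=p_1(t)-Qq_1(t)$ in $\mathbb{Q}[t][Q]$, it is linear in $Q$, hence irreducible over $\mathbb{Q}(t)$, and its content in $\mathbb{Q}[t]$ is $\gcd(p_1,q_1)=1$; by Gauss's lemma $\Pi_+$ is irreducible in $\mathbb{Q}[t,Q]$. Since $\deg_t p_1>\deg_t q_1$, the polynomial $\Pi_+$ is monic in $t$ over $\mathbb{Q}[Q]$, hence primitive there, and Gauss's lemma again yields irreducibility over $\mathbb{Q}(Q)[t]$; the same reasoning applies to $\Pi_-$. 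As $\mathbb{Q}(Q)$ has characteristic zero, $\Pi_+$ and $\Pi_-$ are then separable, and since $P_+\mid\Pi_+$ with $P_+\neq1$ (because $\ee_x+\ee_{\sigma x}\neq0$), irreducibility forces $P_+=\Pi_+$; likewise $P_-=\Pi_-$.

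Next I would verify strong cospectrality and the trace-to-degree inequality, both of which hinge on the hypothesis together with the transcendence of $Q$. With $S=\{z:\sigma z=z\}$ and the block decomposition of $\hat{A}$, one computes $\tr(\hat{A}_+)=\tr(A')+\tr(A_\sigma)+\tr(A_S)+Q$ and $\tr(\hat{A}_-)=\tr(A')-\tr(A_\sigma)+Q$, while $\deg\Pi_+=|V(G')|+|S|$ and $\deg\Pi_-=|V(G')|$. If $\sigma$ fixes a vertex then $|S|\geq1$; if $\sigma$ fixes an edge but no vertex, that edge is necessarily one of the $\{x_i,\sigma x_i\}$, whence $\tr(A_\sigma)\neq0$. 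In either case $\Pi_+\neq\Pi_-$ (different degree or different trace), so these two irreducible polynomials are coprime and the characteristic polynomial $\Pi_+\Pi_-$ of $\hat{G}$ is squarefree. Hence every eigenvalue of $\hat{G}$ is a simple root of exactly one of $\Pi_+,\Pi_-$, so by Lemma~\ref{invol_factor} its eigenvector is of ``$+$''-type (giving $\varphi(x)=\varphi(\sigma x)$) or ``$-$''-type (giving $\varphi(x)=-\varphi(\sigma x)$); Proposition~\ref{prop-sc-equiv} then yields $E_\lambda\ee_x=\pm E_\lambda\ee_{\sigma x}$ for every eigenvalue $\lambda$, i.e.\ $x$ and $\sigma x$ are strongly cospectral. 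The same data give
\[
\frac{\tr(P_+)}{\deg(P_+)}=\frac{\tr(A')+\tr(A_\sigma)+\tr(A_S)+Q}{|V(G')|+|S|},\qquad \frac{\tr(P_-)}{\deg(P_-)}=\frac{\tr(A')-\tr(A_\sigma)+Q}{|V(G')|},
\]
and equating these and comparing coefficients of the transcendental $Q$ forces $|S|=0$ and then $\tr(A_\sigma)=0$; hence the two ratios are unequal precisely when $\sigma$ fixes a vertex or an edge. Since $P_+=\Pi_+$ and $P_-=\Pi_-$ are irreducible over $\mathcal{F}=\mathbb{Q}(Q)$, the vertices $x$ and $\sigma x$ are strongly cospectral, and the ratios differ, Theorem~\ref{pgst-asym} delivers pretty good state transfer from $x$ to $\sigma x$.

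I expect the irreducibility step to be the main obstacle: identifying $p_i$ and $q_i$ via the rank-one determinant expansion, carrying out the two applications of Gauss's lemma, and using the transcendence of $Q$ twice — once for separability (hence $P_\pm=\Pi_\pm$) and once to separate leading coefficients in the trace comparison. The passage from ``$\sigma$ fixes a vertex or edge'' to the strict inequality of the trace-to-degree ratios should then be routine bookkeeping on the block form of $\hat{A}$.
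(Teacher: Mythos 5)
The paper does not prove this corollary---it is quoted from \cite{KLY17b} without proof---so there is no in-paper argument to compare against. Your proposal is a correct and essentially complete derivation, and it is the natural one given the surrounding machinery: the double application of Gauss's lemma to get irreducibility of $\Pi_\pm$ over $\mathbb{Q}(Q)$ from $\gcd(p_i,q_i)=1$ and monicity in $t$, the identification $P_\pm=\Pi_\pm$ via the invariant-subspace argument, simplicity of the spectrum giving strong cospectrality through Proposition~\ref{prop-sc-equiv}, and the transcendence of $Q$ separating the trace-to-degree ratios. This is the same strategy as the original proof in \cite{KLY17b}, except that you route the final step through Theorem~\ref{pgst-asym} (which postdates and packages the field-trace argument Kempton, Lippner, and Yau carry out by hand); that substitution is legitimate and arguably cleaner in the context of this paper. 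One small caveat worth making explicit: in the fixed-edge, no-fixed-vertex case you conclude $\tr(A_\sigma)\neq 0$ because the fixed edges are exactly the pairs $\{z,\sigma z\}$ contributing their weights to the diagonal of $A_\sigma$; this is immediate for positive couplings (the physically relevant setting) but could fail for signed weights if several fixed edges have weights summing to zero, so you should either assume positive weights or state the hypothesis as $\tr(A_\sigma)\neq 0$.
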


Moreover, Kempton, Lippner, and Yau~\cite{KLY17b} provide a partial general result for determining pretty good state transfer when there are no fixed vertices or edges.

\begin{prop} \cite{KLY17b} \label{prop:evenodd}
Let $G$ be a graph with an involution $\sigma$ that has no fixed vertices or edges, and let the number of vertices of $G$ be $N = 2n$.  Let $Q$ be a potential on $G$ that is symmetric across $\sigma$.  Then we have the following.
\begin{enumerate}
\item If $n$ is even, and if $\Pi_+$ is irreducible and the splitting fields for $\Pi_+$ and $\Pi_-$ intersect only in the base field, then there is pretty good state transfer between $x$ and $\sigma x$ for all $x$.
\item If $n$ is odd, and if all the eigenvectors of $A$ are non-vanishing on vertices $x, \sigma x$, then there is not pretty good state transfer between $x$ and $\sigma x$.
\end{enumerate}
\end{prop}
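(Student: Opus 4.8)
The plan is to verify in case (1) the two conditions of Lemma~\ref{lem:eig}, and in case (2) to exhibit an explicit bad integer dependence among the eigenvalues. Throughout, since $\sigma$ has no fixed vertices or edges we have $S = \emptyset$, so the block $A_\delta$ is absent, the matrices $A_+ = A' + A_\sigma$ and $A_- = A' - A_\sigma$ are $n \times n$, and there is no edge $\{v, \sigma v\}$, so $A_\sigma$ has zero diagonal. Hence $\tr(\Pi_+) = \tr(A') = \tr(\Pi_-)$; write $T$ for this common value, and note $\deg \Pi_+ = \deg \Pi_- = n$. For case (1) I would first establish strong cospectrality: since $\Pi_+$ is irreducible of degree $n \ge 2$ over the base field and the splitting fields of $\Pi_+$ and $\Pi_-$ meet only in the base field, $\Pi_+$ and $\Pi_-$ have no common root, because such a root would generate a nontrivial subfield of both splitting fields. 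By Lemma~\ref{invol_factor} the eigenvectors of $A$ then split cleanly, those of $\begin{bmatrix} a & a \end{bmatrix}^T$ type having eigenvalues among the roots of $\Pi_+$ and those of $\begin{bmatrix} c & -c \end{bmatrix}^T$ type among the roots of $\Pi_-$; thus at each eigenvalue either every eigenvector $\varphi$ satisfies $\varphi(x) = \varphi(\sigma x)$ or every one satisfies $\varphi(x) = -\varphi(\sigma x)$, and Proposition~\ref{prop-sc-equiv} shows that $x$ and $\sigma x$ are strongly cospectral. Irreducibility also forces $P_+ = \Pi_+$, since the minimal polynomial of the nonzero vector $\ee_x + \ee_{\sigma x}$ divides $\Pi_+$.

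Next I would verify condition (2) of Lemma~\ref{lem:eig}. Let $\{\lambda_i\}$ be the roots of $P_+ = \Pi_+$ and $\{\mu_j\}$ those of $P_-$ (a divisor of $\Pi_-$), and suppose $\sum_i \ell_i \lambda_i + \sum_j m_j \mu_j = 0$ with $\sum_i \ell_i + \sum_j m_j = 0$. Then $\sum_i \ell_i \lambda_i = - \sum_j m_j \mu_j$ lies both in the splitting field of $\Pi_+$ and in that of $\Pi_-$, hence in the base field; averaging this identity over the Galois group of $\Pi_+$, which permutes $\{\lambda_i\}$ transitively, gives $\sum_i \ell_i \lambda_i = (T/n) \sum_i \ell_i$, that is, $\sum_i \ell_i(\lambda_i - T/n) = 0$. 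Since $\sum_j m_j = - \sum_i \ell_i$, everything reduces to the arithmetic claim that \emph{for an irreducible polynomial of even degree $n$ with trace zero, every integer linear dependence among its roots has even coefficient sum}, applied to $\Pi_+(t + T/n)$, whose roots are the $\lambda_i - T/n$. Granting this, $\sum_j m_j$ is even, and Lemma~\ref{lem:eig} yields pretty good state transfer between $x$ and $\sigma x$.

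I expect this arithmetic claim to be the main obstacle, the remaining steps being bookkeeping with Lemmas~\ref{invol_factor} and~\ref{lem:eig}. My approach would be to study the relation lattice $R = \{(\ell_i) \in \mathbb{Z}^n : \sum_i \ell_i \beta_i = 0\}$ (for $\beta_i$ the roots of the polynomial in question) as a module over the transitive Galois group $G$ acting on the coordinates: $R$ is saturated in $\mathbb{Z}^n$ because the quotient embeds in a field and so is torsion-free, and the all-ones vector lies in $R$ exactly because the trace is zero, contributing the even coefficient sum $n$. Reducing modulo $2$, $R/2R$ is a $G$-invariant binary code; it cannot contain a weight-one word, since transitivity would then force $R/2R = \mathbb{F}_2^n$, hence $R = \mathbb{Z}^n$ and all $\beta_i = 0$, contradicting irreducibility; a further weight analysis should confine $R/2R$ to the even-weight code, which is exactly the claim. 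Making this last step airtight is the delicate part.

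For case (2), $n$ is odd and no eigenvector of $A$ vanishes at $x$, equivalently at $\sigma x$. Then no eigenvalue of $A_+$ or of $A_-$ is repeated and $\Pi_+$ and $\Pi_-$ are coprime, because a repeated or common eigenvalue would give an eigenspace of $A$ of dimension at least two, which must contain a vector vanishing at $x$; moreover $\ee_x$ is cyclic for both $A_+$ and $A_-$, so $P_+ = \Pi_+$ and $P_- = \Pi_-$, both of degree $n$, while $x$ and $\sigma x$ are strongly cospectral as in case (1). Now take $\ell_i = 1$ for every root $\lambda_i$ of $P_+$ and $m_j = -1$ for every root $\mu_j$ of $P_-$: then $\sum_i \ell_i \lambda_i + \sum_j m_j \mu_j = \tr(\Pi_+) - \tr(\Pi_-) = 0$ and $\sum_i \ell_i + \sum_j m_j = n - n = 0$, yet $\sum_j m_j = -n$ is odd, so condition (2) of Lemma~\ref{lem:eig} fails and there is no pretty good state transfer between $x$ and $\sigma x$. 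The only points to check here are $P_\pm = \Pi_\pm$ and the equality $\tr(\Pi_+) = \tr(\Pi_-)$, both immediate from the setup.
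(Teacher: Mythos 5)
This proposition is quoted from~\cite{KLY17b}; the present paper does not reprove it, so there is no in-paper proof to compare against line by line. Your part~(2) is correct: the observations that the non-vanishing hypothesis forces $\Pi_+$ and $\Pi_-$ to be coprime with simple roots (a repeated or shared eigenvalue gives a two-dimensional eigenspace containing a vector vanishing at $x$), hence $P_\pm = \Pi_\pm$ of common odd degree $n$ and common trace $\mathrm{Tr}(A')$ (no fixed edges means $A_\sigma$ has zero diagonal), and then the all-ones/all-minus-ones choice of $\ell_i, m_j$, constitute exactly the special case $f_\pm = P_\pm$ of the paper's Theorem~\ref{thm:odd-no-pgst}, which the author explicitly presents as the natural extension of part~(2). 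That half is fine.

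Part~(1), however, has a genuine gap, and it is not merely the ``delicate step'' you flag: the arithmetic claim you reduce to is \emph{false}. Claim: for an irreducible polynomial of even degree with trace zero, every integer linear dependence among the roots has even coefficient sum. Counterexample: let $r_1, r_2, r_3$ be the roots of a totally real irreducible cubic with zero trace and non-square discriminant $d$ (e.g.\ $x^3 - 4x - 1$, $d = 229$). The six numbers $\pm r_i \sqrt{d}$ are the roots of $x^6 + 2pd\,x^4 + p^2 d^2 x^2 - q^2 d^3$, which is irreducible of degree $6$ (the element $r_1\sqrt{d}$ has degree divisible by $3$ via $r_1^2 d$ and is not in $\mathbb{Q}(r_1)$, hence has degree $6$), totally real, and has trace zero; yet $r_1\sqrt{d} + r_2\sqrt{d} + r_3\sqrt{d} = 0$ is an integer relation with coefficient sum $3$. (Structurally: the three roots forming an orbit of the index-two subgroup $A_3 \le S_3$ sum to an element of the quadratic resolvent field that happens to be $0$.) Your own proposed repair via the $G$-invariant binary code $R/2R$ cannot succeed either: in this example $R/2R$ contains the weight-$3$ word supported on an $A_3$-block, is $G$-invariant, contains no weight-one word, and is not contained in the even-weight code. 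The conclusion is that ``$\Pi_+$ irreducible of even degree, splitting fields meeting only in the base field'' is not by itself enough information to run your argument; the proof in~\cite{KLY17b} necessarily leans on structure you have not used --- in their setting the potential $Q$ is transcendental, the base field is $\mathbb{Q}(Q)$, and $\Pi_\pm$ have the special form $p_\pm(x) - Q q_\pm(x)$, which constrains the possible Galois groups and root relations far more tightly than irreducibility of an arbitrary even-degree polynomial over $\mathbb{Q}$. Your Galois-averaging reduction to $\sum_i \ell_i(\lambda_i - T/n) = 0$ is correct as far as it goes, but the remaining step needs that extra input.
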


We will make use of these results in the sections that follow.

\section{Extensions}

We consider extensions to Theorem~\ref{pgst-asym} to resolve more cases of pretty good state transfer based on the minimal polynomials.  We first note that the condition on the irreducibility of polynomials $P_+$ and $P_-$ cannot, in general, be removed, as demonstrated below.

\begin{example}
By Theorem~\ref{pgst-path-end}, $P_8$ does not have pretty good state transfer between its end vertices $1$ and $8$.  We have that $P_+ = (x - 1)(x^3 - 3x - 1)$ and $P_- = (x + 1) (x^3 - 3x + 1)$.  Hence the polynomials are reducible but all other conditions of the theorem are satisfied. \hfill $\Diamond$
\end{example}

We note that this example generalizes to $P_n$ where $n + 1$ is an odd composite number.  We now consider the following result which extends the characterization of pretty good state transfer when $P_+$ and $P_-$ are irreducible, and additionally provides a natural extension to Proposition~\ref{prop:evenodd}(2).

\begin{theorem} \label{thm:odd-no-pgst}
Let $G$ be a graph, let $x$ and $y$ be strongly cospectral vertices of $G$, and suppose $P_+$ and $P_-$ have (possibily trivial) factors $f_+$ and $f_-$ of odd degree.  Then if
\[
\frac{\tr(f_+)}{\deg(f_+)} = \frac{\tr(f_-)}{\deg(f_-)},
\]
then there is no pretty good state transfer from $x$ to $y$.
\end{theorem}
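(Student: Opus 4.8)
The plan is to apply Lemma~\ref{lem:eig}. Since $x$ and $y$ are assumed to be strongly cospectral, condition (1) of that lemma already holds, so it suffices to produce a single integer relation among the roots of $P_+$ and $P_-$ that violates condition (2): that is, integers $\ell_i$ (indexed by the roots $\lambda_i$ of $P_+$) and $m_j$ (indexed by the roots $\mu_j$ of $P_-$) with
\[
\sum_i \ell_i \lambda_i + \sum_j m_j \mu_j = 0, \qquad \sum_i \ell_i + \sum_j m_j = 0,
\]
but with $\sum_j m_j$ \emph{odd}. Exhibiting such a relation immediately rules out pretty good state transfer from $x$ to $y$.

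First I would fix notation. Let $d_+ = \deg(f_+)$ and $d_- = \deg(f_-)$, both odd by hypothesis (with the trivial case $f_\pm = P_\pm$ permitted, provided the relevant degree is odd), and let $\lambda_1, \dots, \lambda_{d_+}$ be the roots of $f_+$ and $\mu_1, \dots, \mu_{d_-}$ the roots of $f_-$. Because $f_+ \mid P_+$ and $f_- \mid P_-$, and because $P_+$ and $P_-$ have no repeated roots by Lemma~\ref{charpolyfactor} and no common root (by strong cospectrality), the $\lambda_i$ are genuinely distinct roots of $P_+$, the $\mu_j$ are genuinely distinct roots of $P_-$, and $\tr(f_+) = \sum_i \lambda_i$, $\tr(f_-) = \sum_j \mu_j$ as real numbers. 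For the remaining roots of $P_+$ (not among the $\lambda_i$) and of $P_-$ (not among the $\mu_j$) I set the corresponding $\ell$'s and $m$'s equal to $0$.

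The key step is the choice of the remaining coefficients: set $\ell_i = d_-$ for every root $\lambda_i$ of $f_+$, and $m_j = -d_+$ for every root $\mu_j$ of $f_-$. Then
\[
\sum_i \ell_i \lambda_i + \sum_j m_j \mu_j = d_- \tr(f_+) - d_+ \tr(f_-) = d_+ d_- \left( \frac{\tr(f_+)}{d_+} - \frac{\tr(f_-)}{d_-} \right) = 0
\]
by the hypothesis, while $\sum_i \ell_i + \sum_j m_j = d_+ d_- - d_+ d_- = 0$, and $\sum_j m_j = - d_+ d_-$ is odd because $d_+$ and $d_-$ are both odd. By Lemma~\ref{lem:eig} this forces the absence of pretty good state transfer from $x$ to $y$.

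I expect no serious obstacle here beyond keeping the bookkeeping honest; the one point requiring care is that $\tr(f_\pm)$ must genuinely be the sum of $d_\pm$ distinct real roots of $P_\pm$ (legitimate since $P_\pm$ are squarefree with all roots real), so that $d_\mp \in \mathbb{Z}$ are exactly the integer multipliers fed into Lemma~\ref{lem:eig}. Finally I would remark that this result contains Proposition~\ref{prop:evenodd}(2): when $\sigma$ has no fixed vertices or edges, $N = 2n$, and no eigenvector of $A$ vanishes on $x, \sigma x$, one has $\phi = P_+ P_-$ with $\deg P_+ = \deg P_- = n$, and one checks (using that the off-diagonal block $A_\sigma$ has zero diagonal) that $\tr(P_+) = \tr(A') = \tr(P_-)$; hence for $n$ odd the hypothesis of the theorem holds with $f_\pm = P_\pm$, and the conclusion follows.
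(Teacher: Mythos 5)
Your proposal is correct and follows essentially the same argument as the paper: both apply Lemma~\ref{lem:eig} with the integer relation obtained by weighting the roots of $f_+$ by $\pm\deg(f_-)$ and the roots of $f_-$ by $\mp\deg(f_+)$, so that the eigenvalue sum vanishes by the trace hypothesis and $\sum_j m_j = \pm\deg(f_+)\deg(f_-)$ is odd. The only differences are an overall sign on the coefficients and your added remarks on squarefreeness and Proposition~\ref{prop:evenodd}(2), neither of which changes the substance.
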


\begin{proof}
Let 
\[
\ell_i = \begin{cases}
- \deg(f_-) , & f_+(\lambda_i) = 0; \\
0, & f_+(\lambda_i) \neq 0;
\end{cases}
\qquad
m_j = \begin{cases}
\deg(f_+), & f_-(\mu_j) = 0; \\
0, & f_-(\mu_j) \neq 0.
\end{cases}
\]
Then we have
\begin{align*}
\sum_i \ell_i \lambda_i + \sum_j m_j \mu_j &= - \deg(f_-) \tr(f_+) + \deg(f_+) \tr(f_-) = 0, \\
\sum_i \ell_i + \sum_j m_j &= - \deg(f_+) \deg(f_-) + \deg(f_+) \deg(f_-) = 0,
\end{align*}
and since $\sum_j m_j$ is odd, then it follows from Lemma~\ref{lem:eig} that we do not have pretty good state transfer between $x$ and $y$.
\end{proof}

\begin{example}
Let $W = C_4 \vee K_1$, and consider a pair of non-adjacent vertices.  The minimal polynomials are $P_+(x) = (x + 2) (x^2 - 2x - 4) = x^3 - 8x - 8$ and $P_-(x) = x$.  By Theorem~\ref{thm:odd-no-pgst}, $W$ does not have pretty good state transfer between these vertices.
\end{example}

For our next extension, the polynomials are not irreducible, and we cannot find factors to satisfy the result above, but with mild additional conditions we can again rule out pretty good state transfer.

\begin{theorem} \label{no-3-factors}
Let $G$ be a graph and $x, y$ a pair of strongly cospectral vertices of $G$.  If one of $P_+, P_-$ is divisible by the product of nontrivial polynomials $f$ and $g$ such that $f$ has odd trace and $\tr(f) \deg(g) - \tr(g) \deg(f)$ is odd, and the other has a (possibly trivial) factor $h$ of odd degree, then $G$ does not have pretty good state transfer between $x$ and $y$.
\end{theorem}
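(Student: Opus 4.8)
The plan is to apply the rational-independence criterion of Lemma~\ref{lem:eig}. Since $x$ and $y$ are strongly cospectral by hypothesis, its first condition is met, so it suffices to exhibit one choice of integers $\ell_i$ (indexed by the roots $\lambda_i$ of $P_+$) and $m_j$ (indexed by the roots $\mu_j$ of $P_-$) with $\sum_i \ell_i\lambda_i + \sum_j m_j\mu_j = 0$ and $\sum_i \ell_i + \sum_j m_j = 0$ for which $\sum_j m_j$ is \emph{odd}; then the second condition of the lemma fails and there is no pretty good state transfer from $x$ to $y$. This is in the same spirit as the proof of Theorem~\ref{thm:odd-no-pgst}, except that there the relation used only two ``groups'' of eigenvalues (the roots of $f_+$ and of $f_-$) and required the ratio equality $\tr(f_+)/\deg(f_+) = \tr(f_-)/\deg(f_-)$; here I will use three groups — the roots of $f$, of $g$, and of $h$ — which provides enough room to satisfy both linear equations without any ratio assumption.

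First I record the set-up. By Lemma~\ref{charpolyfactor} the polynomials $P_+$ and $P_-$ are squarefree, and since $x,y$ are strongly cospectral they share no root. Hence the factors $f,g$ (which divide one of $P_+,P_-$) are coprime, so their root sets are disjoint, and $h$ (which divides the other) has a root set disjoint from both. Consequently, assigning a common integer weight to ``the roots of $f$'', to ``the roots of $g$'', and to ``the roots of $h$'', and weight $0$ to all remaining roots of $P_+$ and $P_-$, is unambiguous and yields a legitimate tuple $(\ell_i),(m_j)$.

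Next I choose the weights. Set $\alpha := \tr(g)\deg(h) - \tr(h)\deg(g)$, $\beta := \tr(h)\deg(f) - \tr(f)\deg(h)$, and $\gamma := \tr(f)\deg(g) - \tr(g)\deg(f)$; these are the coordinates of the cross product $(\tr f,\tr g,\tr h) \times (\deg f,\deg g,\deg h)$, hence are integers and satisfy $\alpha\tr(f) + \beta\tr(g) + \gamma\tr(h) = 0$ and $\alpha\deg(f) + \beta\deg(g) + \gamma\deg(h) = 0$. Weighting each root of $f$ by $\alpha$, each root of $g$ by $\beta$, and each root of $h$ by $\gamma$, the first identity becomes exactly $\sum_i \ell_i\lambda_i + \sum_j m_j\mu_j = 0$ and the second becomes $\sum_i \ell_i + \sum_j m_j = 0$. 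It remains to track $\sum_j m_j$. If $fg \mid P_+$ and $h \mid P_-$, the only roots of $P_-$ with nonzero weight are those of $h$, so $\sum_j m_j = \gamma\deg(h)$. If instead $fg \mid P_-$ and $h \mid P_+$, then $\sum_j m_j = \alpha\deg(f) + \beta\deg(g)$, and a short expansion shows this equals $-\gamma\deg(h)$. Either way $\bigl|\sum_j m_j\bigr| = \bigl|\tr(f)\deg(g) - \tr(g)\deg(f)\bigr|\cdot\deg(h)$, a product of two odd integers by hypothesis, hence odd, and Lemma~\ref{lem:eig} gives the conclusion.

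The computations involved — the two cross-product identities and the expansion in the second case — are routine. The step that needs genuine care is the bookkeeping over which of $P_+$ and $P_-$ is divisible by $fg$ and which carries the factor $h$: the criterion of Lemma~\ref{lem:eig} is not symmetric in $P_+$ and $P_-$ — it is specifically the $P_-$-side sum that must be odd — so the two placements of the pair $(fg,h)$ are handled separately, and in each the oddness of $\sum_j m_j$ comes from the hypotheses that $\tr(f)\deg(g) - \tr(g)\deg(f)$ and $\deg(h)$ are odd (which in particular force $f$, $g$, $h$ to be genuinely nonconstant, so the construction never degenerates).
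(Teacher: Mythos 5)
Your proof is correct and rests on the same skeleton as the paper's: split the roots into the three groups coming from $f$, $g$, $h$ (pairwise disjoint since $P_+$ and $P_-$ are squarefree and, by strong cospectrality, coprime), assign one integer weight per group so that both linear relations of Lemma~\ref{lem:eig} hold, and check that the $P_-$-side sum is odd. The difference is in the normalization of the weights. Writing $\alpha,\beta,\gamma$ for the cross-product components as you do, the paper's coefficients are exactly $-\tr(f)\cdot(\alpha,\beta,\gamma)$; that extra factor of $-\tr(f)$ is why the paper's computation of $\sum_j m_j = -(2p+1)\tr(f)\deg(h)$ needs the hypothesis that $f$ has odd trace, whereas your unscaled choice gives $\sum_j m_j = \pm\gamma\deg(h) = \pm\bigl(\tr(f)\deg(g)-\tr(g)\deg(f)\bigr)\deg(h)$, which is odd from the other two hypotheses alone. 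So your argument is a genuine (if mild) strengthening: it proves the theorem with the ``$f$ has odd trace'' assumption deleted. Your explicit treatment of the two placements of $(fg,h)$ is also more careful than the paper's bare ``without loss of generality''; that WLOG is in fact harmless because $\sum_i\ell_i = -\sum_j m_j$ forces the two sums to have the same parity, but it deserves the sentence you give it. The only implicit assumption you share with the paper is that the traces involved are integers, which is what makes the parity hypotheses meaningful in the first place.
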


\begin{proof}
Without loss of generality, we may assume $P_+$ is divisible by $f * g$ and $P_-$ has such a factor $h$.  Let $\tr(f) \deg(g) - \tr(g) \deg(f) = 2p + 1$, $\tr(f) \deg(h) - \tr(h) \deg(f) = q$, and
\[
\ell_i = \begin{cases}
(2p + 1) \tr(h) - q \tr(g), & f(\lambda_i) = 0; \\
q \tr(f), & g(\lambda_i) = 0; \\
0, & \mbox{otherwise;}
\end{cases} \qquad
m_j = \begin{cases}
- (2p + 1) \tr(f), & h(\mu_j) = 0; \\
0, & h(\mu_j) \neq 0.
\end{cases}
\]
Then we have
\begin{align*}
&\qquad \sum_i \ell_i \lambda_i + \sum_j m_j \mu_j \\ &= \sum_{i: f(\lambda_i) = 0} ((2p + 1) \tr(h) - q \tr(g)) \lambda_i + \sum_{i: g(\lambda_i) = 0} q \tr(f) \lambda_i + \sum_{j: h(\mu_j) = 0} - (2p + 1) \tr(f) \mu_j \\
&= (2p + 1) \tr(h) \tr(f) - q \tr(g) \tr(f) + q \tr(f) \tr(g) - (2p + 1) \tr(f) \tr(h) = 0; \\
&\qquad \sum_i \ell_i + \sum_j m_j  \\ &= \sum_{i: f(\lambda_i) = 0} (2p + 1) \tr(h) - q \tr(g) + \sum_{i: g(\lambda_i) = 0} q \tr(f) + \sum_{j: h(\mu_j) = 0} - (2p + 1) \tr(f) \\
&= (2p + 1) \tr(h) \deg(f) - q \tr(g) \deg(f) + q \tr(f) \deg(g) - (2p + 1) \tr(f) \deg(h) = 0 \\
\end{align*}
and since $\sum_j m_j$ is odd by construction, it follows from Lemma~\ref{lem:eig} that we do not have pretty good state transfer between $x$ and $y$.
\end{proof}

\begin{example}
Let $H$ be the tree with two adjacent vertices of degree 3, and the remaining vertices of degree 1, and consider two leaves with a common neighbour.  The minimal polynomials are $P_+(x) = (x - 2) (x - 1) (x + 1) (x + 2)$ and $P_-(x) = x$.  We let $h = x$, $f = x - 1$, and $g = (x + 1)(x + 2) = x^2 + 3x + 2$.  Then by Theorem~\ref{no-3-factors}, $H$ does not have pretty good state transfer between these vertices.
\end{example}

\section{Modified Paths}

Let $P_N^{(M, w)}$ denote the path of length $N$, with vertices labeled 1 to $N$, and with an additional vertex labeled $0$ joined to vertex $M$ and an additional vertex labeled $N + 1$ joined to vertex $N + 1 - M$, each by edges of weight $w$.  We consider pretty good state transfer on this modified path.  We first present the following variant of Corollary~\ref{loop-factor} to consider a weighted edge rather than a loop.

\begin{lemma}
Let $G$ be a graph with an involution $\sigma$ and suppose precisely one edge $e$ of $G'$ has transcendental weight $w$, with all other weights of $G$ rational.  Let $\Pi_\pm(x) = g_\pm(x) (p_\pm(x) - w^2 q_\pm(x))$, where $g_\pm(x) p_\pm(x)$ is the characteristic polynomial of $A_\pm$ with $w = 0$, $g_\pm(x) q_\pm(x)$ is the characteristic polynomial of $A_\pm$ with the rows and columns indexed by the endpoints of $e$ deleted, and $\gcd(p_\pm(x), q_\pm(x)) = 1$.  Then every irreducible factor of $\Pi_\pm(x)$ is a factor of $g_\pm(x)$, or is either $p_\pm(x) - w^2 q_\pm(x)$ if one of $p_\pm(x), q_\pm(x)$ are not perfect squares, or one of $\sqrt{p_\pm(x)} - w \sqrt{q_\pm(x)}$, $\sqrt{p_\pm(x)} + w \sqrt{q_\pm(x)}$.
\end{lemma}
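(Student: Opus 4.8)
The plan is to exploit the structure of the involution factorization from Lemma~\ref{invol_factor} together with a determinantal identity for how a characteristic polynomial depends on a single edge weight. Fix a sign $\pm$ and write $B = A_\pm$, so that $B$ is obtained from a fixed rational matrix by placing the weight $w$ (in the case of $\Pi_-$, this is $w$; in the case of $\Pi_+$, note that the block structure $\begin{bmatrix} A' + A_\sigma & A_\delta \\ 2 A_\delta^T & A_S\end{bmatrix}$ still has the entry coming from $e$ appearing, and one checks the relevant entry depends on $w$ in the stated quadratic fashion) in the positions indexed by the two endpoints $a, b$ of $e$. First I would establish that $\det(xI - B)$ is a polynomial of degree at most $2$ in $w$, with no linear term: expanding the determinant along the two rows/columns carrying $w$, the terms are the $w^0$ contribution (which is $\det(xI - B)$ with $w$ set to $0$), a $w^2$ contribution from picking up both off-diagonal $w$'s, and the would-be $w^1$ terms cancel because the matrix is symmetric and the cofactor picking up a single copy of $w$ from position $(a,b)$ is matched by the one from $(b,a)$ with the opposite sign — or more cleanly, use that $\det(xI - B)$ as a function of $w$ is even, since conjugating by the diagonal $\pm 1$ matrix that is $-1$ at $a$ and $+1$ elsewhere sends $w \mapsto -w$ while fixing the characteristic polynomial. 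Hence $\Pi_\pm(x) = r_\pm(x) - w^2 s_\pm(x)$ for rational polynomials $r_\pm, s_\pm$, and the Jacobi/cofactor identity identifies $s_\pm(x)$ as the characteristic polynomial of $B$ with rows and columns $a,b$ deleted. Writing $g_\pm = \gcd(r_\pm, s_\pm)$ and $r_\pm = g_\pm p_\pm$, $s_\pm = g_\pm q_\pm$ gives the stated form $\Pi_\pm = g_\pm \cdot (p_\pm - w^2 q_\pm)$ with $\gcd(p_\pm, q_\pm) = 1$.

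Next I would analyze the irreducibility of the cofactor $p_\pm(x) - w^2 q_\pm(x)$ over $\mathcal{F}$, which (since $w$ is transcendental) is the same as over $\mathbb{Q}(w)$, hence the same as irreducibility of $p_\pm(x) - t\, q_\pm(x)$ over $\mathbb{Q}(t)$ after the substitution $t = w^2$ — but one must be careful, because adjoining $w$ with $w^2 = t$ is a degree-$2$ extension of $\mathbb{Q}(t)$, so a polynomial irreducible over $\mathbb{Q}(t)$ can factor into two conjugate pieces over $\mathbb{Q}(w)$. This is exactly the dichotomy in the statement. The key step: view $h(x, t) := p_\pm(x) - t\, q_\pm(x) \in \mathbb{Q}[x, t]$; since $\gcd(p_\pm, q_\pm) = 1$, $h$ is linear and hence irreducible in $t$, so by Gauss's lemma $h$ is irreducible in $\mathbb{Q}(x)[t]$ and in $\mathbb{Q}[x,t]$, and therefore (again Gauss) $p_\pm(x) - t q_\pm(x)$ is irreducible over $\mathbb{Q}(t) = \mathbb{Q}(w^2)$. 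Now pass to $\mathbb{Q}(w)$: the polynomial $p_\pm(x) - w^2 q_\pm(x)$ either stays irreducible, or factors; when it factors I would argue that because $[\mathbb{Q}(w):\mathbb{Q}(w^2)] = 2$ and the polynomial is irreducible over the subfield, it splits into exactly two factors conjugate under $w \mapsto -w$, say $u(x,w)\, u(x,-w)$ with $u$ of half the degree; setting $w = 0$ shows $u(x,0) u(x,0) = $ (the $w^0$ part) $= p_\pm(x)$, so $p_\pm$ must be a perfect square $u(x,0)^2$, and comparing the top-in-$w$ coefficients forces $q_\pm$ to be a perfect square as well, with $u(x, \pm w) = \sqrt{p_\pm(x)} \mp w \sqrt{q_\pm(x)}$ up to sign and normalization. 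Conversely if neither $p_\pm$ nor $q_\pm$ is a perfect square, no such splitting is possible, so $p_\pm - w^2 q_\pm$ is irreducible. Combining with the factor $g_\pm$ (whose irreducible factors are rational, as claimed, since $g_\pm \mid r_\pm \in \mathbb{Q}[x]$) yields the list of possible irreducible factors of $\Pi_\pm$.

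The main obstacle I anticipate is the perfect-square bookkeeping in the last step: one needs to rule out, e.g., a factorization into a rational factor times a genuinely $w$-involving factor in a way not accounted for, and to handle leading coefficients and signs of $\sqrt{p_\pm}$, $\sqrt{q_\pm}$ so that the stated forms $\sqrt{p_\pm(x)} \pm w\sqrt{q_\pm(x)}$ are literally correct (including when $\deg q_\pm < \deg p_\pm$, where "$\sqrt{q_\pm}$" may have smaller degree and the product still recovers $p_\pm - w^2 q_\pm$ on the nose). A clean way to organize this is: any monic irreducible factor $\psi$ of $p_\pm - w^2 q_\pm$ over $\mathbb{Q}(w)$ has its conjugate $\bar\psi$ (image under $w\mapsto -w$) also a factor; if $\psi = \bar\psi$ then $\psi \in \mathbb{Q}(w^2)[x] = \mathbb{Q}(t)[x]$ and by irreducibility over $\mathbb{Q}(t)$ we get $\psi = p_\pm - w^2 q_\pm$ itself; if $\psi \neq \bar\psi$ then $\psi\bar\psi$ is a nontrivial factor in $\mathbb{Q}(t)[x]$, forcing $\psi \bar\psi = p_\pm - w^2 q_\pm$ by irreducibility, and then evaluating at $w=0$ and at the leading $w$-coefficient gives the perfect-square conclusions and pins down $\psi$ as one of $\sqrt{p_\pm} \pm w\sqrt{q_\pm}$. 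Everything else (the determinant expansion, the Jacobi identity, the gcd normalization) is routine linear algebra.
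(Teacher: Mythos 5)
Your proof is correct and, in its core argument, matches the paper's: rational irreducible factors of $p_\pm(x) - w^2 q_\pm(x)$ are ruled out by $\gcd(p_\pm, q_\pm) = 1$, and any proper factorization must be into a conjugate pair under $w \mapsto -w$ whose product forces $p_\pm$ and $q_\pm$ to be perfect squares with factors $\sqrt{p_\pm(x)} \pm w\sqrt{q_\pm(x)}$ --- the paper reaches the same trichotomy by a case analysis on the $w$-degree of the factor together with the identity $a(x)d(x) = b(x)c(x)$, so your Galois-conjugation packaging is just a cleaner (and more carefully justified) version of the same computation. One caveat: your opening derivation of the form $\Pi_\pm = r_\pm(x) - w^2 s_\pm(x)$ is not actually required, since the lemma takes that decomposition as part of its hypotheses, and the evenness-under-$w \mapsto -w$ argument you give for it is only valid when an endpoint of $e$ meets no other edge (as happens for the pendant vertices in the application to $P_N^{(M,w)}$); for a general graph the determinant expansion leaves a linear term in $w$ equal to twice a cofactor of $xI - A_\pm|_{w=0}$, which need not vanish.
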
 

\begin{proof}
Suppose $\Pi_\pm(x)$ has an irreducible factor $f(x)$ that is not a factor of $g_\pm(x)$.  If $f(x) \in \mathbb{Q}[x]$, then $f(x)$ is a factor of both $p_\pm(x)$ and $q_\pm(x)$, contradicting that $\gcd(p_\pm(x), q_\pm(x)) = 1$.  
Now, if $f(x)$ is linear in $w$, then we must have $\Pi_\pm(x) = g(x) (a(x) + w b(x))(c(x) - w d(x))$.  It is clear that $\gcd(a(x), b(x)) = \gcd(c(x), d(x)) = 1$, as otherwise $\gcd(p_\pm(x), q_\pm(x)) \neq 1$.  Moreover, we have $a(x) d(x) = b(x) c(x)$, from which it follows that $a(x) = c(x) = \sqrt{p_\pm(x)}$ and $b(x) = d(x) = \sqrt{q_\pm(x)}$.  Finally, if $f(x)$ is quadratic in $w$, then it must be precisely $p_\pm(x) - w^2 q_\pm(x)$, as we have shown this expression has no other factors.
\end{proof}

We are now able to determine the existence of pretty good state transfer in the following cases.

\begin{theorem}
Let $w$ be transcendental.  Then 
\begin{enumerate}
\item $P_N^{(M, w)}$ has pretty good state transfer between vertices 1 and $N$ if $M$ is odd, $N$ is even, and $\gcd(N + 1, M) = 1$.  
\item $P_N^{(M, w)}$ does not have pretty good state transfer between vertices 1 and $N$ if:
\begin{enumerate}
\item $M$ is even;
\item $M$ is odd, $N$ is even but not divisible by 4, and $\gcd(N + 1, M)$ has a prime factor of the form $4r + 3$.
\end{enumerate}
\end{enumerate}
\end{theorem}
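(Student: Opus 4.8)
The plan is to use the reflection involution $\sigma\colon i\mapsto N+1-i$ of $\wpath{N}{M,w}$; replacing $M$ by $N+1-M$ if necessary we may take $M\le N/2$. For $N$ even, $\sigma$ fixes no vertex, and in Lemma~\ref{invol_factor} the half-graph $G'$ is the path $1,\dots,N/2$ with the extra vertex $0$ pendant to $M$ by the weight-$w$ edge $e$, while $A_\pm$ is $G'$ carrying a loop of weight $\pm1$ at vertex $N/2$; thus $\det(xI-A)=\Pi_+\Pi_-$ with $\Pi_\pm=\det(xI-A_\pm)$, and $P_\pm$ divides $\Pi_\pm$. Applying the Lemma preceding the theorem to the edge $e$, I would write $\Pi_\pm=g_\pm(p_\pm-w^2q_\pm)$, where $g_\pm$ is the gcd of $\det(xI-A_\pm)|_{w=0}=x\,\phi^\pm_{N/2}(x)$ and of the same determinant with rows and columns $0,M$ deleted, namely $\psi_{M-1}(x)\,\phi^\pm_{N/2-M}(x)$; here $\phi^\pm_k$ is the characteristic polynomial of $P_k$ with a $\pm1$ loop at an end and $\psi_k$ that of $P_k$. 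Since $\phi^\pm_k(0)\ne0$ always, for $M$ odd $x$ divides $p_\pm$ exactly, so $p_\pm$ is not a square and $p_\pm-w^2q_\pm$ is irreducible.

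An elementary computation (eigenvectors $\propto\sin j\theta$, loop boundary condition) shows $\phi^+_k$ has roots $2\cos\frac{j\pi}{2k+1}$ with $j$ odd and $\phi^-_k$ the same with $j$ even. Comparing roots and setting $d=\gcd(N+1,M)$ (odd, since $N+1$ is), the common roots defining $g_\pm$ turn out to be exactly $\{2\cos\frac{t\pi}{d}:t\ \text{odd}\}$ for $g_+$ and the even-$t$ analogue for $g_-$; grouping by Galois orbits, $g_+=\prod_{e\mid d,\,e>1}\mu_e$ and $g_-=\prod_{e\mid d,\,e>1}\nu_e$, where $\mu_e,\nu_e$ are the minimal polynomials of $2\cos\frac{\pi}{e}$ and $2\cos\frac{2\pi}{e}$, of degree $\varphi(e)/2$ and trace $-\mu(e)$ and $\mu(e)$ respectively (Ramanujan sums; $\varphi,\mu$ the totient and Möbius functions). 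One then checks, for $M$ odd, that every eigenvector of a root of $g_\pm$ is a path eigenvector extended by $0$ at vertex $0$, and every eigenvector of a root of $p_\pm-w^2q_\pm$ is nonzero at vertex $1$ (otherwise it would vanish on $\{0,1,\dots,M-1\}$ and hence everywhere, contradicting $0\notin\operatorname{spec}A$), so $P_\pm=\Pi_\pm$; since $\Pi_+$ and $\Pi_-$ are coprime, $1$ and $N$ are strongly cospectral.

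Part~(1): $d=1$ makes $g_\pm=1$, so $P_\pm=p_\pm-w^2q_\pm$ is irreducible of degree $N/2+1$ and trace $\tr(A_\pm|_{w=0})=\pm1$; as $\tfrac{1}{N/2+1}\ne\tfrac{-1}{N/2+1}$, Theorem~\ref{pgst-asym} gives pretty good state transfer. Part~(2)(a): for $M$ even I would produce directly a $u\in\ker A$ supported on $\{0,1,3,\dots,M-1\}$ with $u(0)=1$, $u(1)\ne0$, $u(N)=0$ (solve $Au=0$ along the segment near vertex $0$, cancelling the two contributions at vertex $M$); then $\sigma u\in\ker A$ has disjoint support with $\sigma u(N)\ne0$, $\sigma u(1)=0$, and pairing $E_0(\ee_1\pm\ee_N)$ against $u\pm\sigma u$ shows both are nonzero, so $E_0\ee_1\ne\pm E_0\ee_N$: $1,N$ are not strongly cospectral, hence no pretty good state transfer — and this argument is independent of the parity of $N$. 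Part~(2)(b): here $N\equiv2\pmod4$, so $N/2$ is odd and $\deg(p_\pm-w^2q_\pm)=N/2+1-\tfrac{d-1}{2}$ is odd iff $d\equiv3\pmod4$; also $d$ has a prime factor $q\equiv3\pmod4$, so $\deg\mu_q=\tfrac{q-1}{2}$ is odd with $\tr\mu_q=1$. I would split into cases: if $d\equiv3\pmod4$, apply Theorem~\ref{no-3-factors} with $f=\mu_q$, $g=p_+-w^2q_+$ (trace $0$, odd degree) and $h=\nu_q$; if $d\equiv1\pmod4$ and $q'^{\,2}\mid d$ for some prime $q'\equiv3\pmod4$, apply Theorem~\ref{thm:odd-no-pgst} with $f_+=\mu_{q'^2}$, $f_-=\nu_{q'^2}$ (both of odd degree and trace $0$); otherwise $d$ has at least two distinct prime factors $q,q'\equiv3\pmod4$, and apply Theorem~\ref{no-3-factors} with $f=\mu_q$, $g=\mu_{qq'}$ (trace $-1$, even degree), $h=\nu_q$ — in each case $\tr(f)\deg(g)-\tr(g)\deg(f)$ is odd, so the hypotheses are met.

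The hard part is (2)(b): one has to pin down the factorisation $g_\pm=\prod\mu_e,\prod\nu_e$ and the Möbius traces, and then — since only a prime $q\equiv3\pmod4$, not $q^2$, is assumed to divide $d$ — assemble factors of $P_+$, sometimes including the transcendental factor $p_+-w^2q_+$ (whose degree parity is exactly what the hypothesis $N\equiv2\pmod4$ controls), so that the parity of $\tr(f)\deg(g)-\tr(g)\deg(f)$ comes out right; the three-case split is precisely what makes Theorems~\ref{thm:odd-no-pgst} and \ref{no-3-factors} applicable. Routine but fiddly points are the gcd computation for $g_\pm$ and the boundary cases $M=2$ and $M=N/2$, where the two pendant attachments interact.
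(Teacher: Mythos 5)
Your proposal is correct and, for most of the statement, follows the same route as the paper: the reflection involution together with Lemma~\ref{invol_factor} gives the $\Pi_\pm$ factorization, an explicit eigenvalue-$0$ eigenvector supported near the pendant vertex destroys strong cospectrality when $M$ is even, and irreducibility of $P_\pm$ plus the trace comparison $\pm 1$ feeds into Theorem~\ref{pgst-asym} for part (1). The genuine divergence is in part (2)(b). The paper notes that a prime $q=4r+3$ dividing $\gcd(N+1,M)$ forces $p_{2r+1}-p_{2r}\mid P_+$ and $p_{2r+1}+p_{2r}\mid P_-$, and then applies Theorem~\ref{no-3-factors} \emph{once}, with $f=p_{2r+1}-p_{2r}$ (trace $1$, odd), $h=p_{2r+1}+p_{2r}$ (odd degree), and $g=P_+/f$, the \emph{entire cofactor}: since $\tr(P_+)=1$ and $\deg(P_+)=n+1$, this $g$ has trace $0$ and degree $n-2r$, which is odd precisely because $N\equiv 2\pmod 4$, so $\tr(f)\deg(g)-\tr(g)\deg(f)=n-2r$ is odd and the theorem applies immediately. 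You instead restrict $g$ to irreducible (cyclotomic or transcendental) factors, which forces you to compute $g_\pm=\prod_{e\mid d,\,e>1}\mu_e$ with M\"obius traces and to split into three cases according to $d\bmod 4$ and the multiplicity of primes $\equiv 3\pmod 4$ in $d$. Your three cases are each verifiable and together exhaustive, so the argument is sound, but Theorem~\ref{no-3-factors} does not require $g$ to be irreducible or rational, and taking the full cofactor collapses your analysis to a single case and makes the cyclotomic bookkeeping unnecessary. On the other hand, your explicit check of strong cospectrality for odd $M$ (coprimality of $\Pi_+$ and $\Pi_-$) and of the non-square condition guaranteeing irreducibility of $p_\pm-w^2q_\pm$ are details the paper leaves implicit, and are worth keeping.
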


\begin{proof}
Since $P_N^{(M, w)}$ has an involution which takes vertex $x$ to vertex $N + 1 - x$, then the characteristic polynomial of $P_N^{(M, w)}$ can be factored into $\Pi_+$ and $\Pi_-$, where the eigenvectors corresponding to eigenvalues of $\Pi_+$ are symmetric and the eigenvectors corresponding to eigenvalues of $\Pi_-$ are antisymmetric.  Let $p_N(x)$ be the characteristic polynomial of a (unmodified) path on $N$ vertices.  If $N = 2n$ is even, then we obtain
\begin{align*}
\Pi_+(x) &= x (p_n(x) - p_{n - 1}(x)) - w^2 p_{M - 1}(x) (p_{n - M}(x) - p_{n - M - 1}(x)),  \\
\Pi_-(x) &= x (p_n(x) + p_{n - 1}(x)) - w^2 p_{M - 1}(x) (p_{n - M}(x) + p_{n - M - 1}(x));
\end{align*}
and if $N = 2n + 1$ is odd, then we obtain
\begin{align*}
\Pi_+(x) &= x (p_{n + 1}(x) - p_{n - 1}(x)) - w^2 p_{M - 1}(x) (p_{n - M + 1}(x) - p_{n - M - 1}(x)), \\
\Pi_-(x) &= x p_n(x) - w^2 p_{M - 1}(x) p_{n - M}(x).
\end{align*}

Suppose $M$ is even.  We show that the vertices 1 and $N$ are not strongly cospectral.  Consider the vector $z$ defined for $0 \le j \le M$ by
\[
z(j) = \begin{cases}
1, & j \equiv 1 \pmod{4}, j < M; \\
-1, & j \equiv 3 \pmod{4}, j < M; \\
\frac{(-1)^{\frac{M}{2}}}{w}, & j = 0; \\
0, & \mbox{otherwise},
\end{cases}
\]
and let $y$ be the vector $z$ with the entries in reverse order.  Then $w_+ = \begin{bmatrix} z & 0 & y \end{bmatrix}^T$ and $w_- = \begin{bmatrix} z & 0 & -y \end{bmatrix}$ are both eigenvectors with eigenvalue 0, but $w_+(1) = w_+(N)$ and $w_-(1) = - w_-(N)$.  Hence it follows by Proposition~\ref{prop-sc-equiv} that 1 and $N$ are not strongly cospectral, and hence cannot admit pretty good state transfer.

We may now assume that $M$ is odd.  We wish to determine $P_+$ and $P_-$.  Suppose $\lambda$ is an eigenvalue of $P_N^{(M, w)}$ that is not in the eigenvalue support of vertex 1.  Then for every eigenvector $z$ with eigenvalue $\lambda$, we have that $z(1) = 0$.  It follows that $z(j) = 0$ for all $1 \le j \le M$ and all $N - M < j \le N$.  If $z(0) = 0$, then $z = 0$, contradicting that $z$ is an eigenvector.  It follows that $\lambda = 0$, $N$ is odd, and every odd vertex $j$ ($1 \le j \le N$) has $z(j) = 0$.  At this point we conclude that $\Pi_+ = P_+$ and $\Pi_- = P_-$ for $N$ even.  For $N$ odd, we have two cases.  If $n$ is even, then $z$ is antisymmetric, and $\Pi_+ = P_+$ and $\Pi_- = x P_-$.  If $n$ is odd, then $z$ is symmetric, and $\Pi_+ = x P_+$ and $\Pi_- = P_-$.

We consider the irreducibility of these polynomials.  We begin with the case when $N$ is even.  We observe that $p_k(x) \pm p_{k - 1}(x)$ are factors of $p_{2k}(x)$ and that 0 is a root of $p_k(x)$ if and only if $k$ is odd.  Hence, we have 
\[
\gcd(x (p_n(x) \pm p_{n - 1}(x)), p_{M - 1}(x) (p_{n - M}(x) \pm p_{n - M - 1}(x))) \mid \gcd(p_{2n}(x), p_{M - 1}(x) p_{2n - 2M}(x)).
\]
Now, $p_k(x)$ and $p_\ell(x)$ have common roots if and only if $\gcd(k + 1, \ell + 1) > 1$.  So if $p_{2n}(x)$ has a common root with $p_{M - 1}(x)$, then $\gcd(2n + 1, M) > 1$, and if $p_{2n}(x)$ has a common root with $p_{2n - 2M}(x)$, then $\gcd(2n + 1, 2n - 2M + 1) = \gcd(2n + 1, 2M) = \gcd(2n + 1, M) > 1$.  Hence, if $\gcd(2n + 1, M) = 1$, then $P_\pm$ is irreducible.  Moreover, we have that $\deg(P_+) = \deg(P_-)$ and $\tr(P_+) \neq \tr(P_-)$, so by Theorem~\ref{pgst-asym}, there is pretty good state transfer between vertices 1 and $N$.

Now, suppose $\gcd(2n + 1, M)$ is divisible by a prime $q$ such that $q = 4r + 3$.  Then, $p_{4r + 2}(x) \mid p_{M - 1}(x)$, and it follows that $p_{2r + 1}(x) - p_{2r}(x) \mid P_+$ and $p_{2r + 1}(x) + p_{2r}(x) \mid P_-$.  Then if $n$ is odd, it follows by Theorem~\ref{no-3-factors} that we do not obtain pretty good state transfer between vertices 1 and $N$.
\end{proof}

\appendix

\section{Applying Theorem~\ref{pgst-asym} to Loopless Unweighted Graphs}
\label{app:unweighted}

\begin{tabular}{|c|c|c|c|}
\hline \\

\makecell{\begin{tikzpicture}
\node(A){};
\node[below=of A](B){};
\path (A) edge (B);
\end{tikzpicture}
\\
{$\!\begin{aligned}
P_+(x) &= x - 1 \\
P_-(x) &= x + 1
\end{aligned}$}} &

\makecell{\begin{tikzpicture}
\node(A){};
\node[right=of A](B){};
\node[below=of A](C){};
\node[below=of B](D){};
\path (A) edge (B);
\path (C) edge (D);
\path (A) edge (C);
\end{tikzpicture}
\\
{$\!\begin{aligned}
P_+(x) &= x^2 - x - 1 \\
P_-(x) &= x^2 + x - 1 \\
\end{aligned}$}} &

\makecell{\begin{tikzpicture}
\node(A){};
\node[right=of A](B){};
\node[right=of B](C){};
\node[below=of A](D){};
\node[below=of B](E){};
\node[below=of C](F){};
\path (A) edge (B);
\path (B) edge (C);
\path (D) edge (E);
\path (E) edge (F);
\path (A) edge (D);
\end{tikzpicture}
\\
{$\!\begin{aligned}
P_+(x) &= x^3 - x^2 - 2x + 1 \\
P_-(x) &= x^3 + x^2 - 2x - 1 \\
\end{aligned}$}} &

\makecell{\begin{tikzpicture}
\node(A){};
\node[right=of A](B){};
\node[right=of B](C){};
\node[below=of A](D){};
\node[below=of B](E){};
\node[below=of C](F){};
\path (A) edge (B);
\path (B) edge (C);
\path (D) edge (E);
\path (E) edge (F);
\path (A) edge (D);
\path (B) edge (E);
\end{tikzpicture}
\\
{$\!\begin{aligned}
P_+(x) &= x^3 - 2x^2 - x + 1 \\
P_-(x) &= x^3 + 2x^2 - x - 1 \\
\end{aligned}$}}

\\
\hline
\end{tabular}

\begin{tabular}{|c|c|c|}
\hline \\
\makecell{\begin{tikzpicture}
\node(A){};
\node[right=of A](B){};
\node[right=of B](C){};
\node[right=of C](D){};
\node[below=of A](E){};
\node[below=of B](F){};
\node[below=of C](G){};
\node[below=of D](H){};
\path (A) edge (B);
\path (B) edge (C);
\path (C) edge (D);
\path (E) edge (F);
\path (F) edge (G);
\path (G) edge (H);
\path (B) edge (F);
\end{tikzpicture}
\\
{$\!\begin{aligned}
P_+(x) &= x^4 - x^3 - 3 x^2 + x + 1 \\
P_-(x) &= x^4 + x^3 - 3 x^2 - x + 1
\end{aligned}$}} & 

\makecell{\begin{tikzpicture}
\node(A){};
\node[right=of A](B){};
\node[right=of B](C){};
\node[right=of C](D){};
\node[below=of A](E){};
\node[below=of B](F){};
\node[below=of C](G){};
\node[below=of D](H){};
\path (A) edge (B);
\path (B) edge (C);
\path (C) edge (D);
\path (E) edge (F);
\path (F) edge (G);
\path (G) edge (H);
\path (A) edge (E);
\path (C) edge (G);
\end{tikzpicture}
\\
{$\!\begin{aligned}
P_+(x) &= x^4 - 2 x^3 - 2 x^2 + 3 x + 1 \\
P_-(x) &= x^4 + 2 x^3 - 2 x^2 - 3 x + 1
\end{aligned}$}} & 

\makecell{\begin{tikzpicture}
\node(A){};
\node[right=of A](B){};
\node[right=of B](C){};
\node[right=of C](D){};
\node[below=of A](E){};
\node[below=of B](F){};
\node[below=of C](G){};
\node[below=of D](H){};
\path (A) edge (B);
\path (B) edge (C);
\path (C) edge (D);
\path (E) edge (F);
\path (F) edge (G);
\path (G) edge (H);
\path (A) edge (E);
\path (B) edge (F);
\path (D) edge (H);
\end{tikzpicture}
\\
{$\!\begin{aligned}
P_+(x) &= x^4 - 2 x^3 - 2 x^2 + 3 x + 1 \\
P_-(x) &= x^4 + 2 x^3 - 2 x^2 - 3 x + 1
\end{aligned}$}}
\\ \hline
\end{tabular}

\begin{tabular}{|c|c|c|}
\hline \\
\makecell{\begin{tikzpicture}
\node(A){};
\node[right=of A](B){};
\node[right=of B](C){};
\node[right=of C](D){};
\node[below=of A](E){};
\node[below=of B](F){};
\node[below=of C](G){};
\node[below=of D](H){};
\path (A) edge[bend left] (C);
\path (A) edge (B);
\path (B) edge (C);
\path (C) edge (D);
\path (E) edge[bend right] (G);
\path (E) edge (F);
\path (F) edge (G);
\path (G) edge (H);
\path (A) edge (E);
\end{tikzpicture}
\\
{$\!\begin{aligned}
P_+(x) &= x^4 - x^3 - 4 x^2 + 1 \\
P_-(x) &= x^4 + x^3 - 4 x^2 - 4 x + 1
\end{aligned}$}} &

\makecell{\begin{tikzpicture}
\node(A){};
\node[right=of A](B){};
\node[right=of B](C){};
\node[right=of C](D){};
\node[below=of A](E){};
\node[below=of B](F){};
\node[below=of C](G){};
\node[below=of D](H){};
\path (A) edge[bend left] (C);
\path (A) edge (B);
\path (B) edge (C);
\path (C) edge (D);
\path (E) edge[bend right] (G);
\path (E) edge (F);
\path (F) edge (G);
\path (G) edge (H);
\path (A) edge (E);
\path (C) edge (G);
\path (D) edge (H);
\end{tikzpicture}
\\
{$\!\begin{aligned}
P_+(x) &= x^4 - 3 x^3 - x^2 + 3 x + 1 \\
P_-(x) &= x^4 + 3 x^3 - x^2 - 7 x - 3
\end{aligned}$}} &

\makecell{\begin{tikzpicture}
\node(A){};
\node[right=of A](B){};
\node[right=of B](C){};
\node[right=of C](D){};
\node[below=of A](E){};
\node[below=of B](F){};
\node[below=of C](G){};
\node[below=of D](H){};
\path (A) edge[bend left] (C);
\path (B) edge[bend right] (D);
\path (A) edge (B);
\path (B) edge (C);
\path (C) edge (D);
\path (E) edge[bend right] (G);
\path (F) edge[bend left] (H);
\path (E) edge (F);
\path (F) edge (G);
\path (G) edge (H);
\path (A) edge (E);
\path (B) edge (F);
\end{tikzpicture}
\\
{$\!\begin{aligned}
P_+(x) &= x^4 - 2 x^3 - 4 x^2 + x + 1 \\
P_-(x) &= x^4 + 2 x^3 - 4 x^2 - 9 x - 3
\end{aligned}$}}

\\ \hline
\end{tabular}

\bibliographystyle{plain}

\begin{thebibliography}{11}

\bibitem{PNL10} D. Petrosyan, G.M. Nikolopoulos, and P. Lambropoulos.  State transfer in static and dynamic spin chains with disorder.  \emph{Physical Review A}, 81(5), 042307, 2010.

\bibitem{B03} S. Bose.  Quantum communication through an unmodulated spin chain.  \emph{Physical Review Letters} 91, 207901, 2003.

\bibitem{CDEL04} M. Christandl, N. Datta, A. Ekert, and A.J. Landahl.  Perfect state transfer in quantum spin networks.  \emph{Physical Review Letters}, 92, 187902, 2004.

\bibitem{CDDEKL05} M. Christandl, N. Datta, T. Dorlas, A. Ekert, A. Kay, and A.J. Landahl.  Perfect transfer of arbitrary states in quantum spin networks.  \emph{Physical Review A}, 71(3):12, 2005.

\bibitem{YB05} M.-H. Yung and S. Bose. Perfect state transfer, effective gates and entanglement generation in engineered bosonic and fermionic networks.  \emph{Physical Review A}, 71, 032310, 2005.

\bibitem{WSR11} Y. Wang, F. Shuang, and H. Rabitz.  All possible coupling schemes in XY spin chains for perfect state transfer.  \emph{Physical Review A}, 84, 012307, 2011.

\bibitem{VZ12a} L. Vinet and A. Zhedanov.  How to construct spin chains with perfect state transfer.  \emph{Physical Review A}, 85, 012323, 2012.

\bibitem{KLY17} M. Kempton, G. Lippner, and S.-T. Yau.  Perfect state transfer on graphs with a potential.  \emph{Quantum Information \& Computation}, 17(3), 303--327, 2017.

\bibitem{Zenchuk2012} A.I. Zenchuk. Information propagation in a quantum system: examples of open spin-1/2 chains.  \emph{Journal of Physics A: Mathematical and Theoretical} 45(11), 115306, 2012.

\bibitem{VZ12} L. Vinet and A. Zhedanov.  Almost perfect state transfer in quantum spin chains.  \emph{Physical Review A}, 86(5), 052319, 2012.

\bibitem{Godsil2012} C. Godsil.  State transfer on graphs.  \emph{Discrete Mathematics} 312(1), 129--147, 2012.

\bibitem{GKSS12} C. Godsil, S. Kirkland, S. Severini, and J. Smith.  Number-theoretic nature of communication in quantum spin systems. \emph{Physical Review Letters}, 109(5), 050502, 2012.

\bibitem{CGvB17}  G. Coutinho, K. Guo, and C.M. van Bommel.  Pretty good state transfer between internal nodes of paths.  \emph{Quantum Information \& Computation}, 17(9-10), 825--830, 2017.

\bibitem{vB19} C.M. van Bommel.  A complete characterization of pretty good state transfer on paths.  \emph{Quantum Information \& Computation}, 19(7-8), 601--608, 2019.

\bibitem{CLMS09} A. Casaccino, S. Llyod, S. Mancini, and S. Severini.  Quantum state transfer through a qubit network with energy shifts and fluctuations.  \emph{International Journal of Quantum Information}, 7(8): 1417--1427, 2009.

\bibitem{KLY17b} M. Kempton, G. Lippner, and S.-T. Yau.  Pretty good state transfer in graphs with an involution.  arXiv:1702.07000, 2017.

\bibitem{GGKLM20} C. Godsil, K. Guo, M. Kempton, G. Lippner, and F. M\"unch.  State transfer in strongly regular graphs with an edge perturbation.  \emph{Journal of Combinatorial Theory, Series A} 172, 105181, 2020. 

\bibitem{EKL18} O. Eisenberg, M. Kempton, and G. Lippner.  Pretty good quantum state transfer in asymmetric graphs via potential.  \emph{Discrete Mathematics}, 342, 2821--2833, 2019.

\bibitem{CMF16} X. Chen, R. Mereau, and D.L. Feder.  Asymptotically perfect efficient quantum state transfer across uniform chains with two impurities.  \emph{Physical Review A} 93, 012343, 2016.

\bibitem{JW} J.B. Parkinson and D.J.J. Farnell.  \emph{Introduction to Quantum Spin Systems}.  Berlin: Springer, 2010.

\bibitem{GS17} C. Godsil and J. Smith.  Strongly cospectral vertices.  arXiv:1709.07975, 2017.

\bibitem{BCGS16} L. Banchi, G. Coutinho, C. Godsil, and S. Severini.  Pretty good state transfer in qubit chains - The Heisenberg Hamiltionian. \emph{Journal of Mathematical Physics} 58, 032202, 2017.
\end{thebibliography}

\end{document}